\def\rvdelta{{\mathbf{\delta}}}
\def\rvg{{\mathbf{g}}}
\def\rvu{{\mathbf{i}}}
\def\rvu{{\mathbf{u}}}
\def\rvv{{\mathbf{v}}}
\def\rvx{{\mathbf{x}}}
\def\rvy{{\mathbf{y}}}
\DeclareMathAlphabet{\mathsfit}{\encodingdefault}{\sfdefault}{m}{sl}
\SetMathAlphabet{\mathsfit}{bold}{\encodingdefault}{\sfdefault}{bx}{n}
\renewcommand{\algorithmicrequire}{\textbf{Input:}} 
\renewcommand{\algorithmicensure}{\textbf{Output:}} 
\newtheorem{theorem}{Theorem}[section]
\newtheorem{lemma}{Lemma}
\newtheorem{remark}{Remark}[section]
\newenvironment{proof}{{\it Proof:\quad}}{\hfill $\square$\par}
\newtheorem{assumption}{Assumption}
\begin{document}

\title{Enhance Local Consistency in Federated Learning: A Multi-Step Inertial Momentum Approach}

\author{{Yixing Liu, Yan Sun, Zhengtao Ding,~\IEEEmembership{Senior Member,~IEEE}, Li Shen, Bo Liu and Dacheng Tao,~\IEEEmembership{Fellow,~IEEE}}
\IEEEcompsocitemizethanks{
\par Yixing Liu and Zhengtao Ding are with the Department of Electrical and Electronic Engineering, the School of Engineering, the University of Manchester, Manchester, M13 9PL, U.K., (e-mail: yixing.liu-2@manchester.ac.uk; zhengtao.ding@manchester.ac.uk); 
\par Yan Sun is with the School of Computer Science and the Faculty of Engineering, the University of Sydney, Sydney, NSW 2006, Australia (e-mail: ysun9899@uni.sydney.edu.au); 
\par Li Shen and Dacheng Tao are with JD Explore Academy, JD.com, Beijing, 100000, China, (e-mail: mathshenli@gmail.com; dacheng.tao@gmail.com); Bo Liu
is with Shenzhen Institute of Artificial Intelligence and Robotics for Society, Shenzhen, 518129, China, (e-mail:liubo@cuhk.edu.cn)
\par \dag \;Work was done when Liu was interning at JD Explore Academy. 
\par \ddag \;Corresponding Author: Zhengtao Ding and Li Shen}

}

\maketitle
\begin{abstract}
Federated learning (FL), as a collaborative distributed training paradigm with several edge computing devices under the coordination of a centralized server, is plagued by inconsistent local stationary points due to the heterogeneity of the local partial participation clients, which precipitates the local client-drifts problems and sparks off the unstable and slow convergence, especially on the aggravated heterogeneous dataset. To address these issues, we propose a novel federated learning algorithm, named FedMIM, which adopts the multi-step inertial momentum on the edge devices and enhances the local consistency for free during the training to improve the robustness of the heterogeneity. Specifically, we incorporate the weighted global gradient estimations as the inertial correction terms to guide both the local iterates and stochastic gradient estimation, which can reckon the global objective optimization on the edges' heterogeneous dataset naturally and maintain the demanding consistent iteration locally.   Theoretically, we show that FedMIM achieves the $\mathcal{O}(\frac{1}{\sqrt{SKT}})$ convergence rate with a linear speedup property with respect to the number of selected clients $S$ and proper local interval $K$ in communication round $T$ without convex assumption. Empirically, we conduct comprehensive experiments on various real-world datasets and demonstrate the efficacy of the proposed FedMIM against several state-of-the-art baselines.
\end{abstract}

\begin{IEEEkeywords}
Federated Learning, Distributed Algorithm, Distributed Optimization, Non-Independent and Identically Distributed (Non-IID) Data.
\end{IEEEkeywords}

\IEEEpeerreviewmaketitle


\section{Introduction}

\IEEEPARstart{f}{ederated} Learning (FL) is an increasingly important distributed learning framework where the distributed data is utilized over a large number of clients, such as mobile phones, wearable devices or network sensors \cite{tnnls2}. In the contrast to traditional machine learning paradigms, FL places a centralized server to coordinate the participating clients to train a model, without collecting the client data, thereby achieving a basic level of data privacy and security \cite{Fed-rev-Li} \cite{8752023} \cite{9406173}. Despite the empirical success of the past work, there are still some key challenges for FL: expensive communication, privacy concern and statistical diversity. The first two problems are well fixed in past work \cite{tnnls_communication_1}, \cite{tnnls_communication_2}, \cite{Fed-pri1}, \cite{Fed-pri2}, although the last one is still the main challenge that needs to be deal with.

Due to statistical diversity among clients within FL system, client drift \cite{Mime} leads to slow and unstable convergence within model training. In the case of heterogeneous data, each client's optimum is not well aligned with the global optimum. The conventional FL algorithm does not efficiently solve this data heterogeneity problem but simply applies the stochastic gradient descent algorithm to the local update. As a consequence, the final converged solution of clients may differ from the stationary point of the global objective function since the average of client updates move towards the average of clients' optimums rather than the true optimum. As the distribution drift exists over the client's dataset, the model may overfit the local training data by applying empirical risk minimization and it has been reported that the generalization performance on clients' local data may exacerbate when clients have different distributions between training and testing dataset \cite{LG-Fedavg}. In order to overcome these problems, several solutions have been put forward in recent years. Generally, there are three types of methods: variance reduction-based, regularization-based and momentum-based. Although several past works are presented to reduce client drift and improve the generalization performance, the problem of local inconsistency is not fully addressed. In the real experiment setting, the local interval $K$ is finite and the local update could not reach the local optimum. With the iteration running, the final points for local iteration will remain relatively stable and  become dynamic equilibrium. The stability of these points determines the effectiveness of algorithms and their position will alter when different algorithms are applied. The variance among these points brings the local inconsistency problem \cite{FedOpt}. However, the analysis of these past works is not comprehensive and experimental verification of the reduced local inconsistency is lacking.  In particular, when data heterogeneity among clients raises, the local update may repudiate mutually, that is, the direction of the local gradient could not remain compatible. Thus, the weighted average of the local gradient at the aggregation stage is extraordinarily small and the moving global iteration point may stagnate, which leads to low generalization performance. To address this problem, a federated learning algorithm is required to incorporate historical information of full gradient into client local updates for scaling down the variance between local dynamic equilibrium points. Furthermore, the usage of historical full gradient information to navigate the local update ought to be considered wisely instead of simply applied in the weight of models.

In this paper, we develop a new FL algorithm to enhance local consistency for free, Federated Multi-step Inertial Momentum Algorithm (FedMIM), that mitigates client drift and reduces local inconsistency. From a high-level algorithmic perspective, we bring multi-step inertial momentum to the local update, that is, multi-step momentum is placed in both weight (orange arrow shown in Fig. \ref{fig:FedMIM}) and gradient (yellow arrow shown in Fig. \ref{fig:FedMIM}) to modify the local update. Rather than calculating the momentum updates at the server's side and transmitting them through the down-link, all the clients compute the momentum term before the local iteration, while the historical momentum is kept in the client's storage. Since there are multiple inertial steps in local iteration, each client could utilize more global information during the local update. Thus, the model variance among clients in each communication round reduces, that is, the local inconsistency problem could be alleviated.  FedMIM has two major benefits to undertaking the aforementioned deficiencies. Firstly, FedMIM does not require the server to broadcast the momentum between rounds, which curtails the communication burden. Secondly, in contrast to previous work that focuses on server side momentum \cite{Scaffold} or client side momentum \cite{Fedcm}, FedMIM delivers inertial momentum term to introduce global information avoiding the gradient exclusion in the local update when there exists large data heterogeneity among the participating clients.

Theoretically, we provide a detailed convergence analysis for FedMIM. By setting a proper local learning rate, FedMIM could achieve the convergence rate with a linear speedup property for a general non-convex setting. As for the non-convex function under Polyak-Lojasiewicz (PL) condition, the convergence rate achieves a linear convergence rate with the proper setting of local learning rate. We test FedMIM algorithm on three datasets (CIFAR-10, CIFAR-100 and TinyImagenet) with i.i.d and different Dirichlet distributions in the empirical studies. The results display that our proposed FedMIM shows the best performance among the state-of-the-art baselines. When the heterogeneity increases extremely, the performance of the federated algorithms drops rapidly due to the negative impact of enlarging the local interval.

\begin{algorithm}[t]
\small
    \label{FedMIM}
	\renewcommand{\algorithmicrequire}{\textbf{Input:}}
	\renewcommand{\algorithmicensure}{\textbf{Output:}}
	\caption{Federated Multi-step Inertial Momentum Algorithm (FedMIM)}
	\begin{algorithmic}[1]\label{local}
		\REQUIRE model parameters $\rvx_{0}$, constant weight $\left\{\alpha_{j}\right\}_{j\in I}$, constant weight $\left\{\beta_{j}\right\}_{j\in I}$, local learning rate $\eta_{l}$.
		\ENSURE model parameters $\rvx_{t}$.
		
		\FOR{$t = 0, 1, 2, \cdots, T-1$}
		\STATE communicate $\rvx_{t}$ to local client $i$ and set $\rvx_{i,0}^{t}=\rvx_{t}$
		\STATE randomly select active clients-set $\mathcal{S}_{t}$ at round $t$
		\FOR{client $i \in \mathcal{S}_{t}$ parallel}
		\STATE \textbf{Local Update:}
	    \STATE $\rvdelta_{t}=-(\rvx_{t}-\rvx_{t-1})/K$
		\FOR{$k = 0, 1, 2, \cdots, K-1$}
		\STATE $\rvy_{i,k,1}^{t}=\rvx_{i,k}^{t}-\sum_{j\in I}\alpha_{j}\rvdelta_{t-j}$ ($\sum_{j\in I} \alpha_{j} < 1$)
		\STATE $\rvy_{i,k,2}^{t}=\rvx_{i,k}^{t}-\sum_{j\in I} \beta_{j}\rvdelta_{t-j}$ ($\sum_{j\in I} \beta_{j}=\rho$)
		\STATE randomly sample local data $\xi_{i,k}^{t}$ \STATE compute stochastic gradient $\rvg_{i,k}^{t}$ of $\nabla\!f_i(\rvy^t_{i,k,2})$\!\!
		\STATE $\rvx_{i,k+1}^{t}=\rvy_{i,k,1}^{t}-(1-\sum_{j \in I}\alpha_{j})\eta_{l}\rvg_{i,k}^{t}$
		\ENDFOR
		\STATE communicate $\rvx_{i,K}^{t}$ to the  server
		\ENDFOR
		\STATE $\rvx_{t+1}=\frac{1}{S}\sum_{i\in \mathcal{S}_{t}} \rvx_{i,K}^{t}$
		\ENDFOR
	\end{algorithmic}  
\end{algorithm}

\textbf{Contribution}. We summarize the main contributions of this work as three-fold:
\begin{itemize}
    \item  FedMIM algorithm delivers a multi-step inertial momentum to guide the gradient updates. We show that FedMIM successfully solves the problems on the heterogeneous datasets, which benefits the cross-device implantation in practical applications.

    \item We display the convergence analysis of FedMIM for general non-convex function and non-convex function under PL conditions. The theoretical analysis highlights the advantage of innovating multi-step inertial momentum and presents hyperparameter conditions.

    \item We demonstrate the practical efficacy of the proposed algorithm over competitive baselines through extensive experiments. The results illustrate that FedMIM consistently surpasses several vigorous baselines and significantly handles the local data heterogeneity.
\end{itemize}

\begin{figure}[t]
\centering
\includegraphics[width=0.49\textwidth]{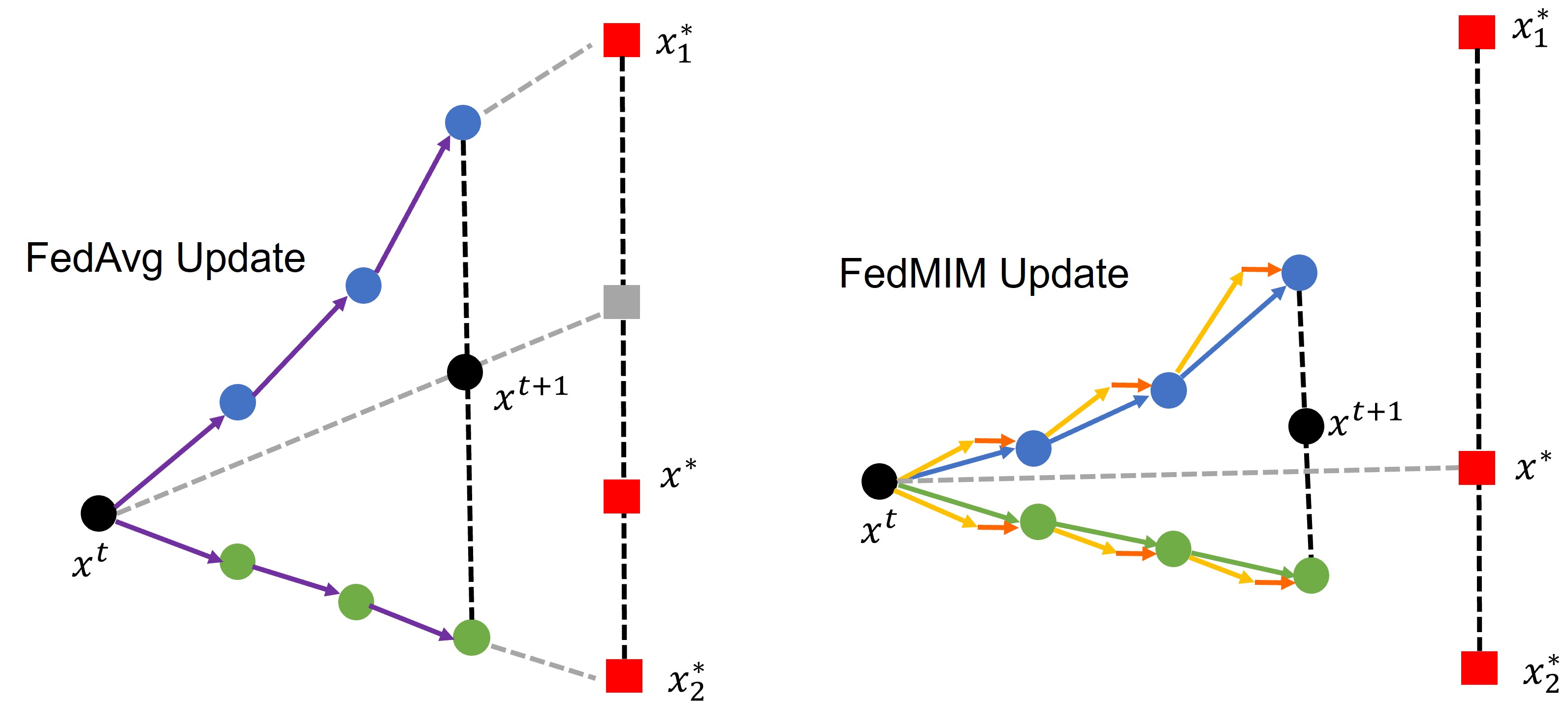}
\caption{Local steps of FedAvg and FedMIM with $2$ clients.}
\label{fig:FedMIM}
\end{figure}

\section{Related Work}

We focus on and categorize the existing approaches to overcoming client drift issues as three main aspects: variance reduction, regularization, and momentum-based algorithms. These methods perform as a global correction to guide the local training, which significantly improves efficiency.

\textbf{Variance reduction.} By adopting the variance reduction techniques in stochastic optimization, a series of methods are proposed to mitigate the heterogeneous inconsistency, which efficiently mitigates the client-drift problems in FL. Karimireddy \textit{et al.} \cite{Scaffold} employs control variants to correct for the `client-drift' in its local updates. Karimireddy \textit{et al.} \cite{Mime} provides a combination of control-variates and server-level optimizer states at every client-update step to ensure that each local update mimics that of the centralized method. Mitra \textit{et al.} \cite{Fedlin} includes a gradient correction term in the local update rule that exploits memory. Murata \textit{et al.} \cite{BVR-L-SGD} utilizes small second-order heterogeneity of local objectives and suggests randomly picking up one of the local models instead of taking the average of them when clients are synchronized. Zhao \textit{et al.} \cite{Fedpage} applies the vanilla minibatch Stochastic Gradient Descent (SGD) update or the previous gradient with a small adjustment with pre-defined probability. Zhao \textit{et al.} \cite{Frecon} addresses communication compression and high heterogeneity by proposing compressed and client-variance reduced methods COFIG and FRECON.

\textbf{Regularization.} Another efficient approach is to adopt the regularization terms on the local training process to correct the local objective to approach the global optimal. Li \textit{et al.} \cite{Fedprox} firstly employs the prox-term into FL framework and proposes the FedProx. Tran Dinh \textit{et al.} \cite{FedDR} proposes the FedDR with a Douglas-Rachford splitting in the training. Zhang \textit{et al.} \cite{FedPD} adopts the primal-dual method to the FL. Acar \textit{et al.} \cite{Feddyn} improve the FedPD and propose a partially merged parameters method with the fully merged dual variables in the global server, named FedDyn. Wang \textit{et al.} \cite{fedadmm1} and Gong \textit{et al.} \cite{fedadmm2} adopt the alternating direction method of multipliers in the FL to extend the federated primal-dual methods. Fallah \textit{et al.} \cite{PFLmeta} puts forward a personalized federated framework with regularization to achieve a better generalization. T Dinh \textit{et al.} \cite{fedme} incorporates the Moreau-Envelopes in the local training with a stage-wised prox-term. Huang \textit{et al.} \cite{PFedAmp} proposes an adaptive weight for the regularization term to encourage the clients to aggregate more with similar neighbours. The efficient regularization methods are important to the FL field.

\textbf{Global / Local momentum.} Inspired by the success of the global correction technique, the exponential moving average term is introduced to federated learning framework to correct the local training. Liu \textit{et al.} \cite{Fl-localmomentum} adopts the momentum-SGD to the local clients to improve the generalization performance with a convergence analysis. Wang \textit{et al.} \cite{FL-slowmo} proposes a global momentum method to further improve the stability in the server side. Xu \textit{et al.} \cite{Fedcm} incorporate the global offset to the local client as a client-level momentum to correct the heterogeneous drifts. Ozfatura \textit{et al.} \cite{FedADC} combine the global and local momentum update and propose the FedADC algorithm to avoid the local over-fitting. Reddi \textit{et al.} \cite{Fedadam} sets a global ADAM optimizer with the momentum update and proposes the adaptive federated optimizer in the FL. Wang \textit{et al.} \cite{local_consistency} corrects the pre-conditioner in the global server. Though momentum terms are the biased estimation of global information, they still contribute a lot to the federated frameworks in practical empirical experiments.

\section{FedMIM: Federated Multi-Step Momentum Algorithm}
In this section, we describe how FedMIM works while reducing client drift and improving convergence. To begin with, we provide some preliminary for FL and notations adopted in this paper in Section~\ref{setup}. We introduce the diagram of our proposed FedMIM method, and the insights of its improvement on the performance and the resistance to the local heterogeneity in Section~\ref{FedMIM}.

\subsection{Problem Setup}\label{setup}
Considering an FL framework with $N$ local clients and a centralized server to handle the training process. The client $i$ for $i\in[N]$ has the local private dataset $\mathcal{D}_i$ without sharing, and the data sample $\xi_{i}$ is randomly drawn from the local dataset $\mathcal{D}_i$. The minimization problem could be formulated as:
\begin{equation}
\small
    \min_{\rvx \in \mathbb{R}^d} f(\rvx):=\frac{1}{N}\sum^{N}_{i=1}f_i(\rvx)
\end{equation}
where $f_i(\rvx):=\mathbb{E}_{\xi_i\sim p_i}[f_i(\rvx;\xi_i)]$ is the local loss function corresponding to the client $i$ with the data distribution $p_i$. Note that $p_i$ may differ among the local clients, which introduces heterogeneity.

\textbf{Notations.} In this paper, we consider $K$ local iteration steps and total $T$ communication rounds in the training. $\|\|$ denotes the Euclidean norm if not otherwise specified. In $t$-th communication round, a set of active clients $\mathcal{S}_t$  with size $S$ is adopted. The symbol $(\cdot)^t_{i,k}$ represents the vectors at $k$-th local step on the $i$-th client after the $t$ communication rounds. For simplicity, $[N]$ represents the set $\left\{1,2,\cdots, N\right\}$. $\rvx_*$ and $\rvx^*_i$ stand for global optimum and local optimum for client $i$ respectively.

\subsection{FedMIM Algorithm}\label{FedMIM}

FedAvg \cite{Fedavg} proposes a partial scenario of the local SGD method, with the global averaged aggregation after local $K$ updates, disturbed by the client drift problems due to the local heterogeneous case. To mitigate the aforementioned problem, we propose the novel federated learning algorithm with a multi-step inertial momentum-based technique, dubbed FedMIM, as shown in \ref{FedMIM}. Firstly we introduce the basic process of the proposed FedMIM. At the beginning of each communication round, the server broadcasts the global model $\rvx_t$ to the activated clients for local training. The local clients update their states $\rvx^t_{i,k}$ with total $K$ iterations and then transmit their updated models $\rvx_{i,K}^{t}$ to the server, while the server aggregates the received local models as the updated global model. In the local clients, they firstly calculate the global model increment $\delta_t$ by the last global model $\rvx_{t-1}$ in their local own storage. And then, the clients compute the momentum updated $\rvy^t_{i,k,1}$ and $\rvy^t_{i,k,2}$ with a multi-step momentum. It should be noted that $\alpha$ and $\beta$ are adopted for averaging weights in the multi-step momentum term. Each client calculates an unbiased stochastic gradient $\rvg^t_{i,k}$ and updates its state. When the local update stops, $\rvx^t_{i,k}$ is transmitted to the server. The iterate scheme details of FedMIM is summarized in Algorithm~\ref{local}.

To build intuition into our method, we first highlight multi-step inertial part. Lemma in appendix illustrate that $\delta_t$ is the exponential moving average of past client gradient. The momentum term $\delta_t$ represent as an approximation to the offset of the global loss function $\nabla f(\rvx_t)$, that is, $\delta_t \approx \eta_l\nabla f(\rvx_t)$. Thus, we have local update:
\begin{equation}\label{eq2}
\small
\begin{aligned}
&\quad\rvx^t_{i,k+1}\\
&=\rvx^t_{i,k}-(1-A)\eta_l\delta_{t-j}-A\eta_l \nabla f_i(\rvx^t_{i,k}-\eta_l\sum\nolimits_{j\in I} \beta_j\delta_{t-j})\\
&\approx \rvx^t_{i,k}-(1-A)\eta_l\nabla f(\rvx_t)-A\eta_l \nabla f_i\left(\rvx^t_{i,k}-\hat{\rho}\nabla f(\rvx_t)\right)\\
&\approx \rvx^t_{i,k}-\eta_l[\nabla f_i(\rvx^t_{i,k}-\hat{\rho}\nabla f(\rvx_t))\\
&\quad+(1-A)(\nabla f_i(\rvx^t_{i,k}-\hat{\rho}\nabla f(\rvx_t))-\nabla f(\rvx_t))].
\end{aligned}
\end{equation}
For simplicity, we set the constant $A = 1- \sum_{j \in I}\alpha_j$ and $\hat\rho = \eta_l\rho$. This equation illustrates that FedMIM interprets the correction term to the local gradient direction. This correction term matches the difference between global and local gradient. The second term $\nabla f_i(\rvx^t_{i,k}-\hat{\rho}\nabla f(\rvx_t))$ in Eq.(\ref{eq2}) behaves like Nesterov gradient part, which means that there is global momentum $\hat\rho\nabla f(\rvx_t)$ placed on local iteration point $\rvx^t_{i,k}$ when client $i$ computes the local gradient in $k$-th local update and $t$-th communication round. The added global momentum pushes the local gradient calculation point to move in the same direction compared with $\nabla f_i(\rvx^t_{i,k})$. This benefits the circumstance where the data distribution among the clients differs intensively since the client's update would radiate in a high data heterogeneity environment. In the meanwhile, the added global momentum dwindles gradually as the full gradient $\nabla f(\rvx)$ reduces, and the influence of global momentum scales down with the training process. Therefore, each participating client could reach their dynamic equilibrium at the end of the training. The final correction term is controlled by the parameter $\alpha$. It is notable that the local gradient part is $\nabla f_i(\rvx^t_{i,k}-\hat{\rho}\nabla f(\rvx_t))$ rather than $\nabla f_i(\rvx^t_{i,k})$ as the direction of local update is $\nabla f_i(\rvx^t_{i,k}-\hat{\rho}\nabla f(\rvx_t))$ in the front term and the correction term ought to be consistent with it.

FedMIM saves the communication bandwidth, which is a crucial problem in FL study. During the broadcasting stage, the server only needs to send the current global state $\rvx_t$ to the clients rather than the aggregation gradient information in FedCM. The storage of the client is efficiently utilized since it only needs to store $J$ steps of historical information where $J$ is usually set to be very small in practical scenarios and the client's storage requirement does not increase violently. Next, FedMIM simply calculates the gradient once, while FedSAM computes the gradient twice in one local iteration. Thus, the local calculation process could be condensed and total training time is much reduced. The historical global state is stored in clients' storage. FedMIM brings multi-step inertial momentum, which is robust to high client heterogeneity. Since global gradient information is applied to avert the average of client update direction to be minuscule and force the global iteration point to move. The introduced multi-step inertial momentum makes the gradient changes more smooth during the local training, although there are some atrocious clients who hold discordant data. The long-step looking makes the approximation exact and smooth for local training, which promotes communication efficiency and enhances the robustness to the heterogeneity in the FedMIM.

\section{Convergence Analysis}
In this section, we provide the theoretical analysis for FedMIM focusing on the general non-convex setting. Before proposing our convergence analysis, We first state the several assumptions as follows.

\begin{assumption}
\label{ass1:smoothness}
For all $\rvx,\rvy\in\mathbb{R}^{d}$, the function $f_i$ is differentiable and $\nabla f_i$ is $L$-Lipschitz continuous for all $i\in [N]$, i.e.,
\begin{equation}
\small
    \Vert\nabla f_{i}(\rvx)-\nabla f_{i}(\rvy)\Vert\leq L\Vert \rvx-\rvy\Vert
\end{equation}
\end{assumption}

\begin{assumption}
\small
\label{ass4:PL inequality}
Let $f_*=f(\rvx_*)$ and $\rvx_*$ is a minimizer of $f$, for all $\rvx\in\mathbb{R}^{d}$, the function $f$ satisfies PL inequality if there
exists the constant $\mu > 0$ such that the function $f$ satisfies the following:
    \begin{equation}
    \small
        \frac{1}{2}\left\|\nabla f(\rvx)\right\|^2 \geq \mu (f(\rvx)-f_*)
    \end{equation}
\end{assumption}

\begin{assumption}
\label{ass2:bounded_variance}
For all $\rvx\in\mathbb{R}^{d}$, the stochastic gradient $\nabla f_{i}(\rvx, \xi)$, computed by the sampled data $\xi$ on the local client $i$, is an unbiased estimator of $\nabla f_{i}(\rvx)$ with bounded variance $\sigma_{l}^{2}$, i.e.,
\begin{equation}
\small
    \mathbb{E}_{\xi}[\nabla f_{i}(\rvx, \xi)]=\nabla f_{i}(\rvx),\quad \mathbb{E}_{\xi}\Vert\nabla f_{i}(\rvx, \xi) - \nabla f_{i}(\rvx)\Vert^{2} \leq \sigma_{l}^{2}
\end{equation}
\end{assumption}

\begin{assumption}
\label{ass3:dissimilarity}
For all $\rvx\in\mathbb{R}^{d}$, the local functions $f_{i}$ holds $(G,B)$-locally dissimilarity with $f$, i.e.,
    \begin{equation}
    \small
        \frac{1}{N}\sum_{i=1}^N\Vert\nabla f_{i}(\rvx)\Vert^{2} \leq G^{2} +  B^{2}\Vert\nabla f(\rvx)\Vert^{2}.
    \end{equation}
\end{assumption}

These assumptions are commonly used in federated optimization \cite{Fedprox,Fedadam,Mime,Scaffold}. Assumption \ref{ass1:smoothness} tells the smoothness of local loss function $f_i$, that is, the gradient function of $f_i$ is Lipschitz continuous with Lipschitz constant $L$. Assumption~\ref{ass4:PL inequality} shows the global function satisfies the PL conditions. The PL inequality does not require $f_i$ to be convex but suggests that every stationary point is a minimum. The $\mu$-PL property is implied by $\mu$-strong convexity, but it allows for multiple minima and does not require convexity of any kind. Assumption~\ref{ass2:bounded_variance} bounds the variance of stochastic gradient and Assumption~\ref{ass3:dissimilarity} provides the bound of the different levels of the local private heterogeneity.

We now state our convergence results of FedMIM. The detailed proof is stated in Appendix.

\begin{theorem}\label{thm1}
Let Assumptions \ref{ass1:smoothness}, \ref{ass2:bounded_variance}, and \ref{ass3:dissimilarity} hold. Assume the partial participation ratio being $|\mathcal{S}_t|/N$ where $\mathcal{S}_{t}$ is a uniformly sampled subset from the $N$ clients and satisfies $|\mathcal{S}_{t}|=S$ and let $\rvu_{t} = \rvx_t - \frac{K}{A}\sum_{j\in I}\sum_{s=j}^{J-1}{\alpha_{s}}\delta_{t-j}$. With the constant local learning rate satisfying $\eta_{l}\leq \min\{\frac{1}{4LK\surd{A}},\frac{3}{16KL}\}$ and $\lambda \in (0,\frac{1}{2})$ in Algorithm~\ref{local}, the sequence $\{\rvu_t\}$ satisfies the following upper bound: 
\begin{equation}
\small
\begin{aligned}
\min_{t\in[T]}\mathbb{E}\|\nabla f(\rvu_{t})\|^{2}&\leq \frac{f_0-f_*}{\eta_l\lambda KT}+ \Psi_1
\end{aligned}
\end{equation}
where
\begin{equation}
\small
\begin{aligned}
\Psi_1 = &\frac{1}{\lambda}\Big(\frac{\eta_l L\sigma_{l}^{2}}{S} + \frac{4\eta_l KLG^{2}}{S}+{9}\eta_{l}^{2}A^{2}KL^{2}\sigma_{l}^{2}+72\eta_{l}^{2}AK^{2}L^{2}G^{2} \\
&\quad +3 \eta_l^2 L^{2}K^{2}VC\Big)
\end{aligned}
\end{equation}
 $V$, $C$ are two constants defined in the proof for the convergence analysis (details are stated in the Appendix).
\end{theorem}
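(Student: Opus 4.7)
The plan is to introduce the auxiliary sequence $\rvu_t$ as a \emph{virtual iterate} that absorbs the multi-step momentum correction, so that the telescoping of $\rvu_{t+1}-\rvu_t$ collapses to an object resembling a plain stochastic gradient step. First I would unroll the local updates to write
$$\rvx_{t+1}-\rvx_t = -\sum_{j\in I}\alpha_j K\rvdelta_{t-j} - A\eta_l\frac{1}{S}\sum_{i\in \mathcal{S}_t}\sum_{k=0}^{K-1}\rvg_{i,k}^t,$$
and then verify, using the definition $\rvdelta_t=-(\rvx_t-\rvx_{t-1})/K$, that the linear combination $\rvu_t=\rvx_t-\frac{K}{A}\sum_{j\in I}\sum_{s=j}^{J-1}\alpha_s\rvdelta_{t-j}$ satisfies a clean recursion of the form $\rvu_{t+1}-\rvu_t = -\eta_l\cdot\frac{1}{S}\sum_{i\in \mathcal{S}_t}\sum_{k=0}^{K-1}\rvg_{i,k}^t$ (possibly up to a well-controlled boundary term at $t=0$). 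This reduction is the real conceptual step and I expect it to be the most delicate bookkeeping, because the nested sum over $s\ge j$ must exactly cancel the $\alpha$-weighted momentum retained at the next round.

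Given that reduction, the outer argument proceeds by the standard descent lemma on $f$ using Assumption~\ref{ass1:smoothness}:
$$\mathbb{E}f(\rvu_{t+1})\le f(\rvu_t)+\langle\nabla f(\rvu_t),\mathbb{E}[\rvu_{t+1}-\rvu_t]\rangle+\frac{L}{2}\mathbb{E}\|\rvu_{t+1}-\rvu_t\|^2.$$
I would split the inner product by adding and subtracting $-\eta_l K\nabla f(\rvu_t)$, which yields a negative $-\eta_l K\|\nabla f(\rvu_t)\|^2$ term together with a residual measuring how far the averaged local gradient at $\rvy_{i,k,2}^t$ lies from $\nabla f(\rvu_t)$. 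Young's inequality with parameter $\lambda\in(0,1/2)$ converts this residual into (i) a fraction of the descent term we wish to keep and (ii) a sum of drift-type errors $\|\rvy_{i,k,2}^t-\rvu_t\|^2$. The quadratic term $\|\rvu_{t+1}-\rvu_t\|^2$ is handled through partial-participation variance (yielding the $\sigma_l^2/S$ and $G^2/S$ contributions via Assumptions~\ref{ass2:bounded_variance} and~\ref{ass3:dissimilarity}).

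The next step is a \emph{client-drift lemma}: I need to bound $\mathbb{E}\|\rvx_{i,k}^t-\rvx_t\|^2$ and $\mathbb{E}\|\rvu_t-\rvx_t\|^2$ uniformly in $k$. For the first quantity, unrolling $\rvx_{i,k+1}^t=\rvy_{i,k,1}^t - A\eta_l\rvg_{i,k}^t$ and using $L$-smoothness plus the $(G,B)$-dissimilarity gives a recursion of the form $\Phi_{k+1}\le(1+c_1)\Phi_k + c_2\eta_l^2(\sigma_l^2+K G^2+K\|\nabla f(\rvu_t)\|^2)$ plus contributions from the momentum offset $\sum_j\alpha_j\rvdelta_{t-j}$; these offsets are in turn bounded by an auxiliary lemma expressing $\rvdelta_t$ as a moving average of previous stochastic gradients, hence controllable by $\sigma_l^2$, $G^2$, and $\|\nabla f\|^2$ terms from prior rounds. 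Choosing $\eta_l\le\min\{1/(4LK\sqrt{A}),3/(16KL)\}$ closes the recursion and yields the $\eta_l^2 K^2 L^2 V C$-type residual $\Psi_1$ with the stated coefficients. I expect this multi-round coupling through the $\delta_{t-j}$ history to be the main technical obstacle, since it prevents a straightforward single-round drift bound and forces a joint estimate controlled by the constants $V$ and $C$ introduced in the appendix.

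Finally I would telescope the descent inequality from $t=0$ to $T-1$, divide by $\eta_l\lambda K T$ (absorbing the absolute-term boundary contribution from $\rvu_0$ vs.\ $\rvx_0$ into $f_0-f_*$), and replace the average over $t$ by $\min_{t\in[T]}$ to obtain
$$\min_{t\in[T]}\mathbb{E}\|\nabla f(\rvu_t)\|^2\le\frac{f_0-f_*}{\eta_l\lambda K T}+\Psi_1,$$
matching the claimed bound. The smoothness constant on the $\sigma_l^2/S$ and $G^2/S$ terms arises from the quadratic-in-step term, while the $\eta_l^2 A^2 K L^2\sigma_l^2$, $\eta_l^2 A K^2 L^2 G^2$, and $\eta_l^2 L^2 K^2 V C$ pieces come from the closed-form client-drift bound.
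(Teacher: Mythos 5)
Your proposal follows essentially the same route as the paper's proof: the auxiliary iterate $\rvu_t$ yielding the clean recursion $\rvu_{t+1}=\rvu_t-\frac{\eta_l}{S}\sum_{i\in\mathcal{S}_t}\sum_k\rvg_{i,k}^t$, the descent lemma with the inner product split against $\nabla f(\rvu_t)$, a client-drift lemma closed by the stepsize condition $\eta_l\le 1/(4LK\sqrt{A})$, a partial-participation variance bound for the quadratic term, and telescoping. The one place you plan to work harder than the paper does is the control of the momentum history: you propose to genuinely bound $\delta_{t-j}$ through $\sigma_l^2$, $G^2$, and prior-round gradient norms, whereas the paper simply assumes a uniform bound $\mathbb{E}\|\tilde{\delta}_t\|^2\le\eta_l^2 C$ and propagates it to $\mathbb{E}\|\delta_t\|^2\le\eta_l^2 C$ by induction, which is where the otherwise-unexplained constant $C$ in $\Psi_1$ comes from.
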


\begin{remark}
If the number of total clients $N$ is large enough, the initial state point will affect the convergence upper bound to a great extent, which requires a larger local learning rate $\eta_{l}$ to diminish the negative impact. Specifically, when we fix $N$ as a constant and select a proper local interval $K$, let $\eta_{l}=\mathcal{O}(\frac{\sqrt{S}}{\sqrt{KT}})$, the convergence rate achieves at least $\mathcal{O}(\frac{1}{\sqrt{SKT}})$, which indicate the linear speedup of the FedMIM and the stochastic variance dominates the upper bound of the convergence.
\end{remark}

\begin{theorem}\label{thm2}
Let Assumption \ref{ass1:smoothness}, \ref{ass4:PL inequality}, \ref{ass2:bounded_variance},  and  \ref{ass3:dissimilarity} hold. Given $\eta_{l}\geq\frac{1}{\mu\lambda KT}$, the output $\rvu^{\textup{out}}$ chosen randomly from the sequence $\{\rvu_t\}$ satisfies:
\begin{equation}
\small
\begin{aligned}
\mathbb{E}\left\|\nabla f\left(\rvu^\textup{out}\right)\right\|^{2}& \leq 4\mu(f_0-f_*)e^{-\mu\eta_{l}\lambda KT}+\Psi_2
\end{aligned}
\end{equation}
where 
\begin{equation}
\small
\begin{aligned}
\Psi_2&=\frac{1}{\lambda}\Big(\frac{2\eta_l L\sigma_{l}^{2}}{S} + \frac{8\eta_l KLG^{2}}{S}+18\eta_{l}^{2}KA^{2}L^{2}\sigma_{l}^{2}\\
&\quad+144\eta_{l}^{2}AK^{2}L^{2}G^{2}
+6\eta_l^2 L^{2}K^{2}V C\Big).
\end{aligned}
\end{equation}

\end{theorem}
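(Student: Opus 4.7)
The plan is to convert the general non-convex descent inequality that underlies Theorem~\ref{thm1} into an exponential contraction by exploiting the PL inequality in Assumption~\ref{ass4:PL inequality}. I expect the proof of Theorem~\ref{thm1} to establish, at the per-round level on the auxiliary sequence $\rvu_t = \rvx_t - \tfrac{K}{A}\sum_{j\in I}\sum_{s=j}^{J-1}\alpha_{s}\delta_{t-j}$, a descent bound of the form
\begin{equation*}
\eta_{l}\lambda K\,\mathbb{E}\|\nabla f(\rvu_{t})\|^{2} \;\leq\; \mathbb{E}[f(\rvu_{t})-f(\rvu_{t+1})] \;+\; R_{t},
\end{equation*}
where $R_{t}$ aggregates the stochastic-gradient variance, heterogeneity, and multi-step momentum residuals that sum to the $\Psi_1$ in Theorem~\ref{thm1}. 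The step-size restriction $\eta_{l}\leq\min\{1/(4LK\sqrt{A}),\,3/(16KL)\}$ from Theorem~\ref{thm1} is inherited, so the descent term has the correct sign.

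With this descent in hand, I would apply PL, $\|\nabla f(\rvu_{t})\|^{2}\geq 2\mu(f(\rvu_{t})-f_{*})$, to turn the bound into a one-step contraction
\begin{equation*}
\mathbb{E}[f(\rvu_{t+1})-f_{*}] \;\leq\; (1-2\mu\eta_{l}\lambda K)\,\mathbb{E}[f(\rvu_{t})-f_{*}] \;+\; R_{t}.
\end{equation*}
The step-size upper bound together with $\lambda\in(0,1/2)$ forces $2\mu\eta_{l}\lambda K\in(0,1)$, so the recursion can be unrolled over $t=0,\dots,T-1$. Using $(1-2\mu\eta_{l}\lambda K)^{T}\leq e^{-2\mu\eta_{l}\lambda KT}$ and bounding the geometric sum $\sum_{t}(1-2\mu\eta_{l}\lambda K)^{T-1-t}\leq 1/(2\mu\eta_{l}\lambda K)$ yields
\begin{equation*}
\mathbb{E}[f(\rvu_{T})-f_{*}] \;\leq\; e^{-2\mu\eta_{l}\lambda KT}(f_{0}-f_{*}) \;+\; \tfrac{1}{2\mu\eta_{l}\lambda K}\cdot\overline{R},
\end{equation*}
where $\overline{R}$ is the uniform-in-$t$ envelope of $R_{t}$ assembled from the same $\sigma_{l}^{2},G^{2},L,K,V,C$ constants that populate $\Psi_2$. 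The assumption $\eta_{l}\geq 1/(\mu\lambda KT)$ merely ensures the exponential factor is bounded by $e^{-1}$, making the rate meaningful.

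Finally, I would choose $\rvu^{\textup{out}}$ from the sequence $\{\rvu_t\}$ (the last iterate, or a weighted random draw that re-expresses the bound as an expectation) and convert the function-value gap into the gradient-norm quantity demanded by the theorem. Combining PL with $L$-smoothness — specifically $\|\nabla f(\rvx)\|^{2}\leq 2L(f(\rvx)-f_{*})$ applied to the unrolled bound, together with the PL-induced comparison $L\geq\mu$ — lets me absorb the constants into the advertised prefactor $4\mu(f_{0}-f_{*})$ and the noise floor $\Psi_2$, which is essentially a doubling of $\Psi_1$ as the descent coefficient gets halved to accommodate the PL-to-smoothness conversion.

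The main obstacle is the first step: producing a descent inequality in the clean form above. The sequence $\rvu_t$ is an offset of $\rvx_t$ by a weighted tail of past momenta $\{\delta_{t-j}\}$, so $f(\rvu_{t+1})-f(\rvu_{t})$ mixes the genuine one-round progress of FedMIM with cross-round coupling through the multi-step buffer. Handling this cleanly — almost certainly by absorbing the $\delta$-dependent cross terms into $R_t$ using $L$-smoothness (Assumption~\ref{ass1:smoothness}), the dissimilarity bound (Assumption~\ref{ass3:dissimilarity}), and Young/Jensen inequalities on the $J$-step moving sum — is the technical heart of the argument; once that inequality is in place, the PL-based contraction above is essentially mechanical.
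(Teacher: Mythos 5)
Your setup is right as far as the contraction recursion: the paper does start from the same per-round descent inequality established in the proof of Theorem~\ref{thm1}, and it does invoke the PL inequality at that point. But the final step of your plan --- bounding the function-value gap $\mathbb{E}[f(\rvu_T)-f_*]$ and then converting back to a gradient norm via $\|\nabla f(\rvx)\|^2 \le 2L(f(\rvx)-f_*)$ --- cannot produce the stated bound. That conversion puts a prefactor $2L$ on the exponential term and inflates the accumulated noise floor by roughly the condition number: you would obtain something like $2L(f_0-f_*)e^{-2\mu\eta_l\lambda KT}+\frac{L}{\mu}\Psi_1$ rather than $4\mu(f_0-f_*)e^{-\mu\eta_l\lambda KT}+\Psi_2$ with $\Psi_2\approx 2\Psi_1$. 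The comparison $L\ge\mu$ that you cite goes in the wrong direction for your purposes: $2L(f_0-f_*)$ dominates $4\mu(f_0-f_*)$, so there is nothing to ``absorb,'' and the factor $L/\mu$ simply does not appear in $\Psi_2$. As written, your route proves a strictly weaker statement.

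The paper avoids this by never leaving gradient-norm space. It splits the descent term $-\eta\lambda\,\mathbb{E}\|\nabla f(\rvu_t)\|^2$ in half, applies PL only to one half (yielding $-\mu\eta\lambda\,\mathbb{E}[f(\rvu_t)-f_*]$), keeps the other half as $-\frac{\eta\lambda}{2}\mathbb{E}\|\nabla f(\rvu_t)\|^2$, and rearranges to bound $\mathbb{E}\|\nabla f(\rvu_t)\|^2$ by a weighted difference of consecutive function-value gaps plus per-round noise. Averaging with the exponential weights $(1-\mu\eta\lambda)^{-t}/\sum_{t\in[T]}(1-\mu\eta\lambda)^{-t}$ telescopes the function-value terms, and $\rvu^{\textup{out}}$ is drawn with probability proportional to exactly these weights --- it is neither the last iterate nor a uniform draw, and the theorem's left-hand side is the expectation under that specific distribution. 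The hypothesis $\eta_l\ge\frac{1}{\mu\lambda KT}$ then gives $(1-\mu\eta\lambda)^{T}\le e^{-\mu\eta_l\lambda KT}\le\frac{1}{2}$, which is what turns the leading term $\frac{2(f_0-f_*)}{\eta\lambda\sum_{t}(1-\mu\eta\lambda)^{-t}}$ into $4\mu(f_0-f_*)e^{-\mu\eta_l\lambda KT}$; the doubling of $\Psi_1$ into $\Psi_2$ comes from dividing the noise by $\eta\lambda/2$ instead of $\eta\lambda$, not from any smoothness conversion. To repair your argument you would need to either adopt this half-split-plus-weighted-sampling device, or accept the $L/\mu$ degradation and restate the conclusion in terms of $f(\rvu_T)-f_*$.
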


\begin{remark} 
    The term introduced by the initial point is exponentially diminished by the communication round $T$.
    Let $\eta_{l}=\mathcal{O}(\frac{\log(\mu^{2} ST)}{\mu\lambda KT})\geq\frac{1}{\mu\lambda KT}$, the convergence rate achieves at least $\mathcal{O}(\frac{1}{\mu ST})$.
\end{remark}

\begin{remark}
The $B$ in Assumption~\ref{ass3:dissimilarity} weakly influences the convergence bound both in Theorem~\ref{thm1} and \ref{thm2} in our proof, which indicates that the major negative impact from the heterogeneity is the constant upper bound $G$. If $G$ maintains the stability without large fluctuations during the training, let $\rvx=\rvx_{*}$ we have $\frac{1}{N}\sum_{i=1}\Vert \nabla f_{i}(\rvx_{*})\Vert^{2}\leq G^{2}$, where $G$ measures the local inconsistency of total clients. Enhancing the local consistency will further improve the performance in the FL framework.
\end{remark}

\section{Experiments}
In this section, we demonstrate the efficacy of the proposed FedMIM. We test the generalization performance under different levels of the heterogeneity on the real-world dataset. To ensure a fair comparison, we fix all the common hyper-parameters and finetune the specific parameters unique to each algorithm to search for their best performance. We provide a brief introduction to the experimental setups in \ref{exp_setups}. We compare the proposed FedMIM with the baselines and report their performance in \ref{exp_res}. Some ablation studies and hyper-parameters sensitivity studies are stated in \ref{exp_abl}.

\subsection{Experimental Setups}\label{exp_setups}
\textbf{Dataset.} We conduct the extensive experiments on CIFAR-10, CIFAR-100~\cite{CIFAR-100} and TinyImagenet~\cite{tinyimagenet}. Both CIFAR-10 and CIFAR-100 contain 50K training samples and 10K test samples of images with the size of $32\times32$. TinyImagenet contains 200 categories of 100K training samples and 10K test samples of images with the size of $64\times64$ selected from the Imagenet~\cite{imagenet}. We divide the training dataset into $N$ parts and deploy them to local clients without sharing access. At the beginning of each communication round in the training, we randomly crop, horizontally flip, and normalize the local dataset as the common data augmentation. For the i.i.d. setting, the local dataset is randomly sampled. For the non-i.i.d. setting, we follow by \cite{dir} to introduce different levels of heterogeneity by sampling the label ratios from different Dirichlet distributions, which is a common federated setup in previous works~\cite{Fedadam,Scaffold,Feddyn,Fedcm,Fedsam,Fedagm}. In addition, we superimpose a color perturbation~\cite{color} which is strongly correlated to the local clients to further induce the heterogeneity. Specifically, we adopt different brightness and saturation coefficients to the different local training data samples.

\textbf{Baselines.} We compare the performance of several SOTA baselines, including FedAvg~\cite{Fedavg}, FedAdam~\cite{Fedadam}, SCAFFOLD~\cite{Scaffold}, FedDyn~\cite{Feddyn}, FedCM~\cite{Fedcm}, and FedSAM~\cite{Fedsam}. FedAdam utilizes an adaptive Adam optimizer at the server side to improve the performance on the global updates. SCAFFOLD and FedCM apply the global gradient estimation to guide the local updates. FedDyn employs the different variants of the primal-dual method to reduce data heterogeneity. FedSAM places Sharpness Aware Minimization (SAM) optimizer at the local side and develops a momentum FL algorithm to improve the model generalization ability. We summarize and discuss these methods in Section~\ref{FedMIM} to illustrate the respective improvements and practical performance.

\textbf{Implementation details.} To ensure a fair comparison, we fix the common hyper-parameter setups. We set the local learning rate as $0.1$ and decay it as $0.998\times$ per round. The global learning rate is set as $1.0$ to aggregate each local parameters without decaying, except for FedAdam which adopts $0.1$. The local mini-batch is selected in $\{20, 50\}$. The weight decay is set as $1e$-3. The local training epoch is selected in $\{1, 2, 5, 10\}$ to further show the impacts of enlarging the local interval. The number of total clients is selected in $\{100, 500\}$ and the sampling probability of each client being activated per communication round is selected in $\{0.2, 0.1, 0.02\}$. The prox-weight in FedDyn and the client-level momentum weight in FedCM are both set as $0.1$. We report the detailed hyper-parameters selections for each experimental result in the following figures and tables.

\begin{figure*}[t]
	\centering
    \subfigure[Participation Ratios.]{
    	\begin{minipage}[b]{0.49\textwidth}
   		 	\includegraphics[width=0.49\textwidth]{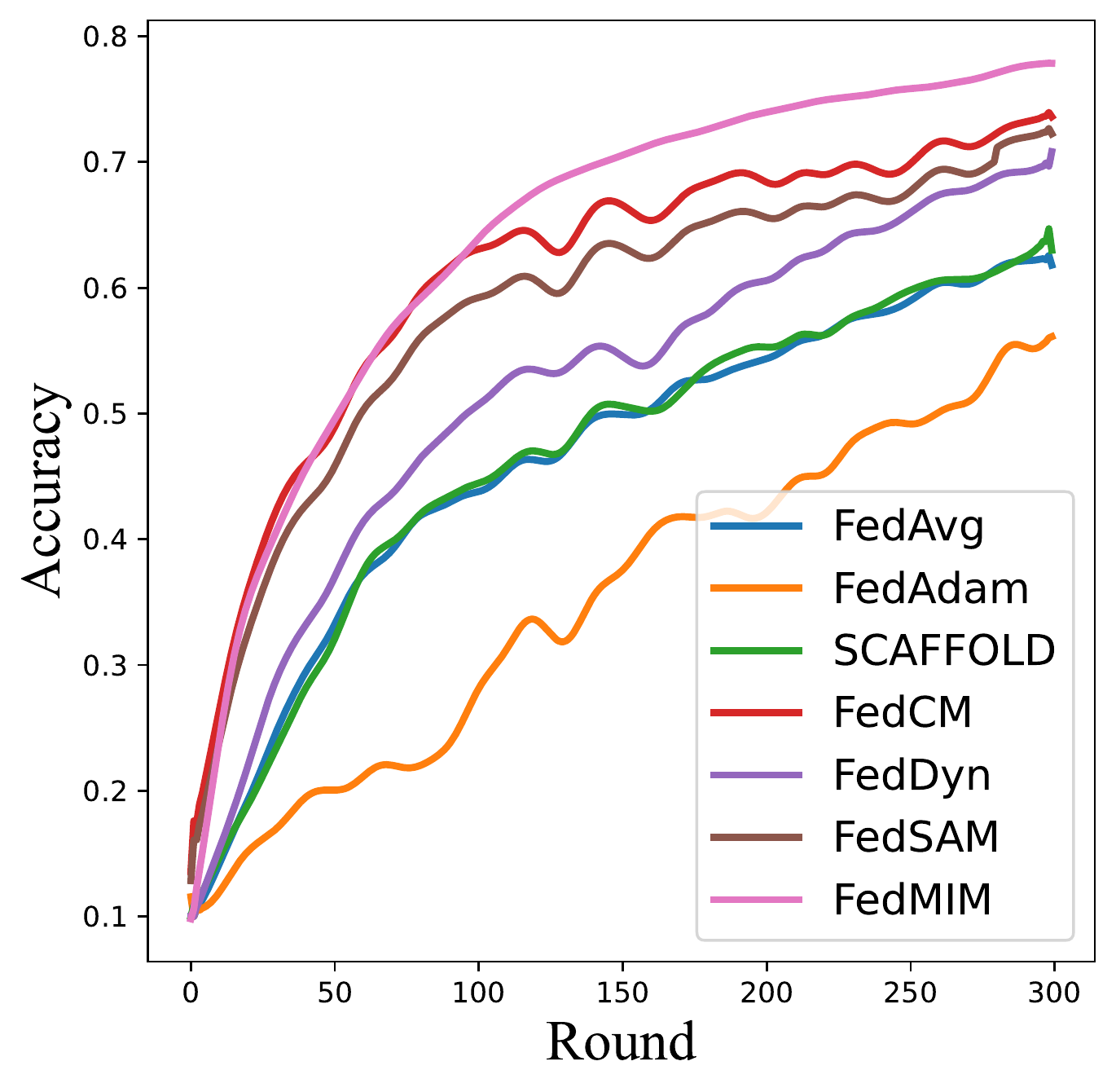}\!\!
		 	\includegraphics[width=0.49\textwidth]{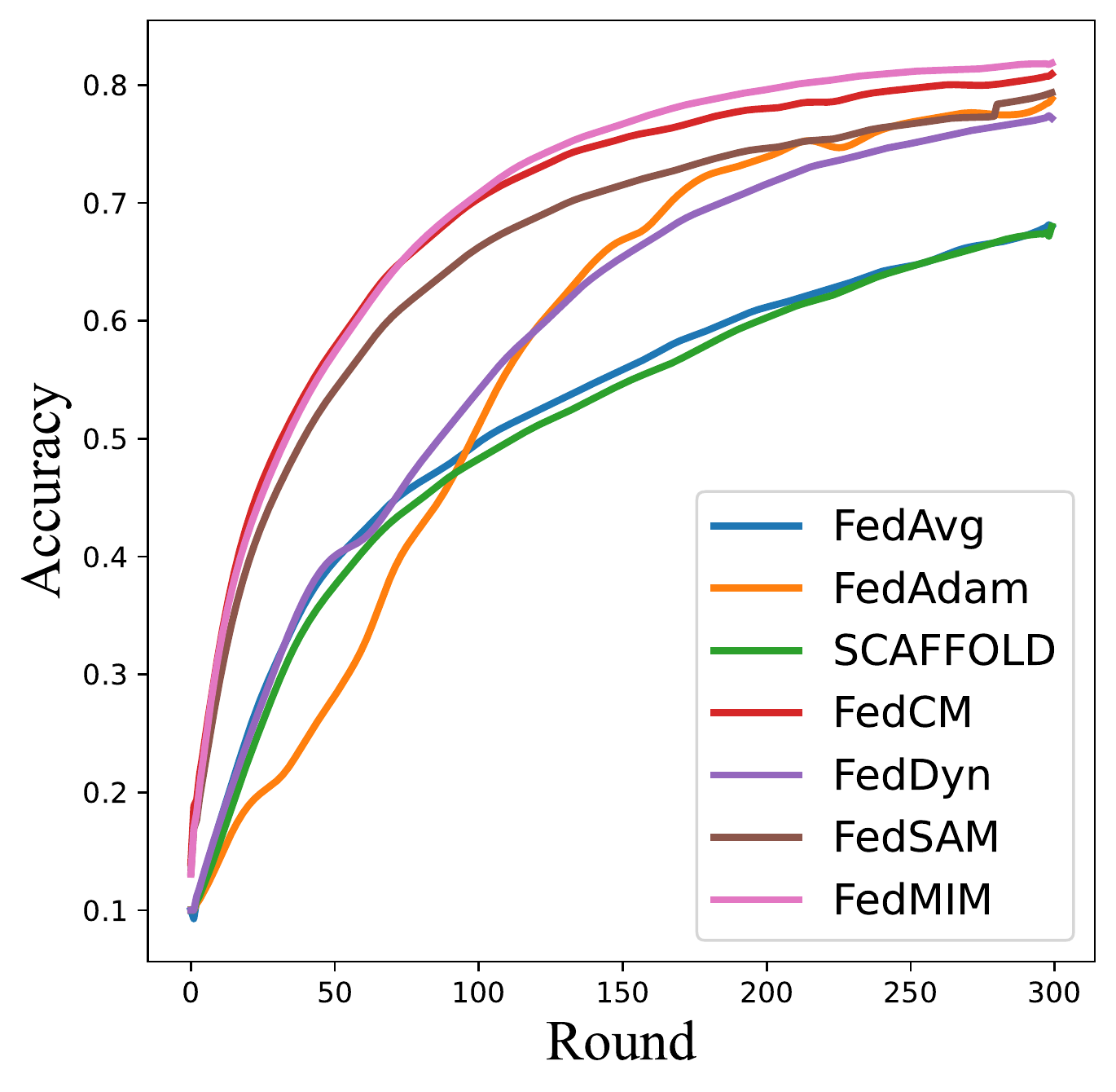}
    	\end{minipage}
    }\!\!\!\!\!
    \subfigure[Local Consistency.]{
    	\begin{minipage}[b]{0.49\textwidth}
   		 	\includegraphics[width=0.49\textwidth]{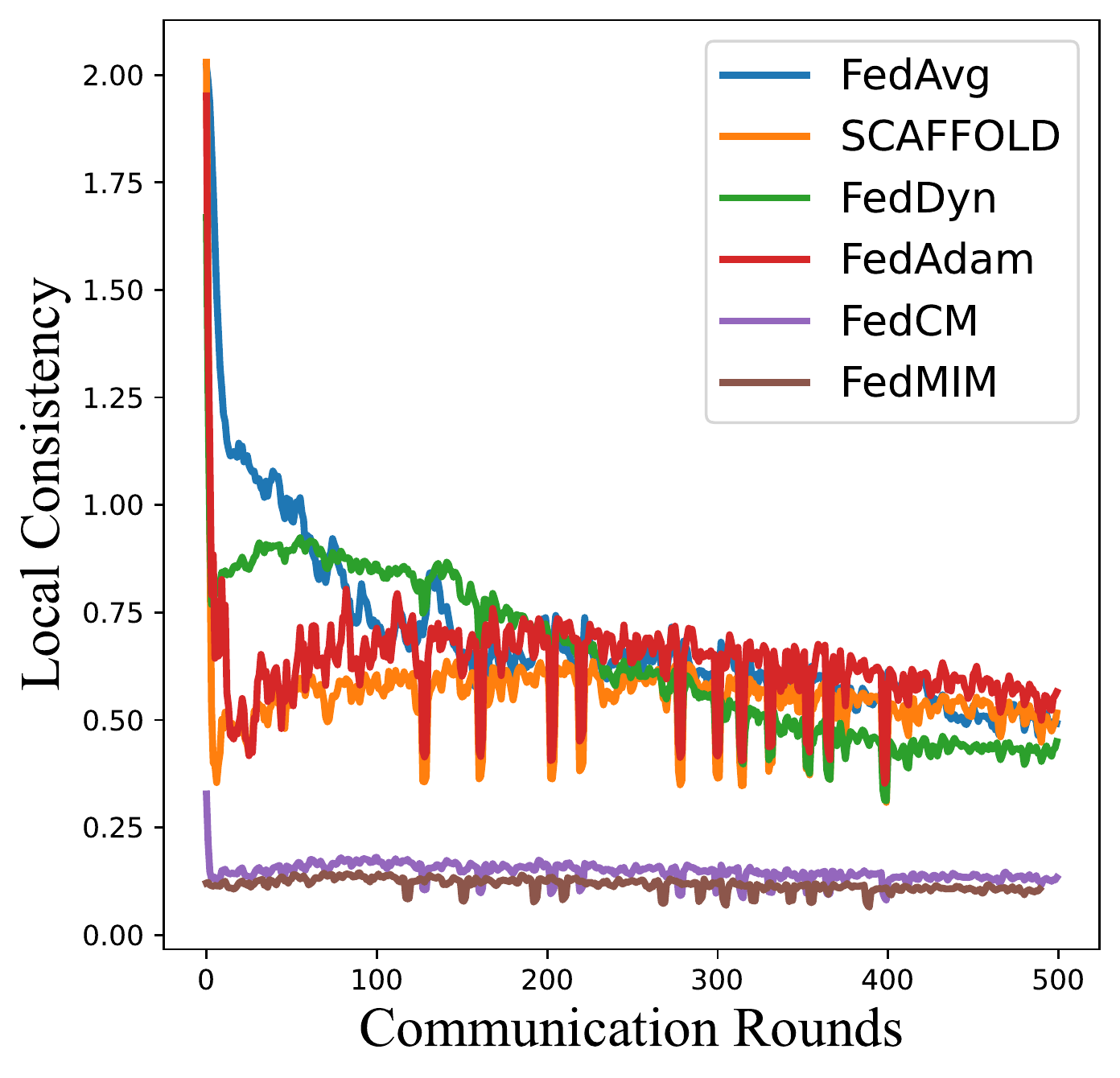}\!\!
		 	\includegraphics[width=0.49\textwidth]{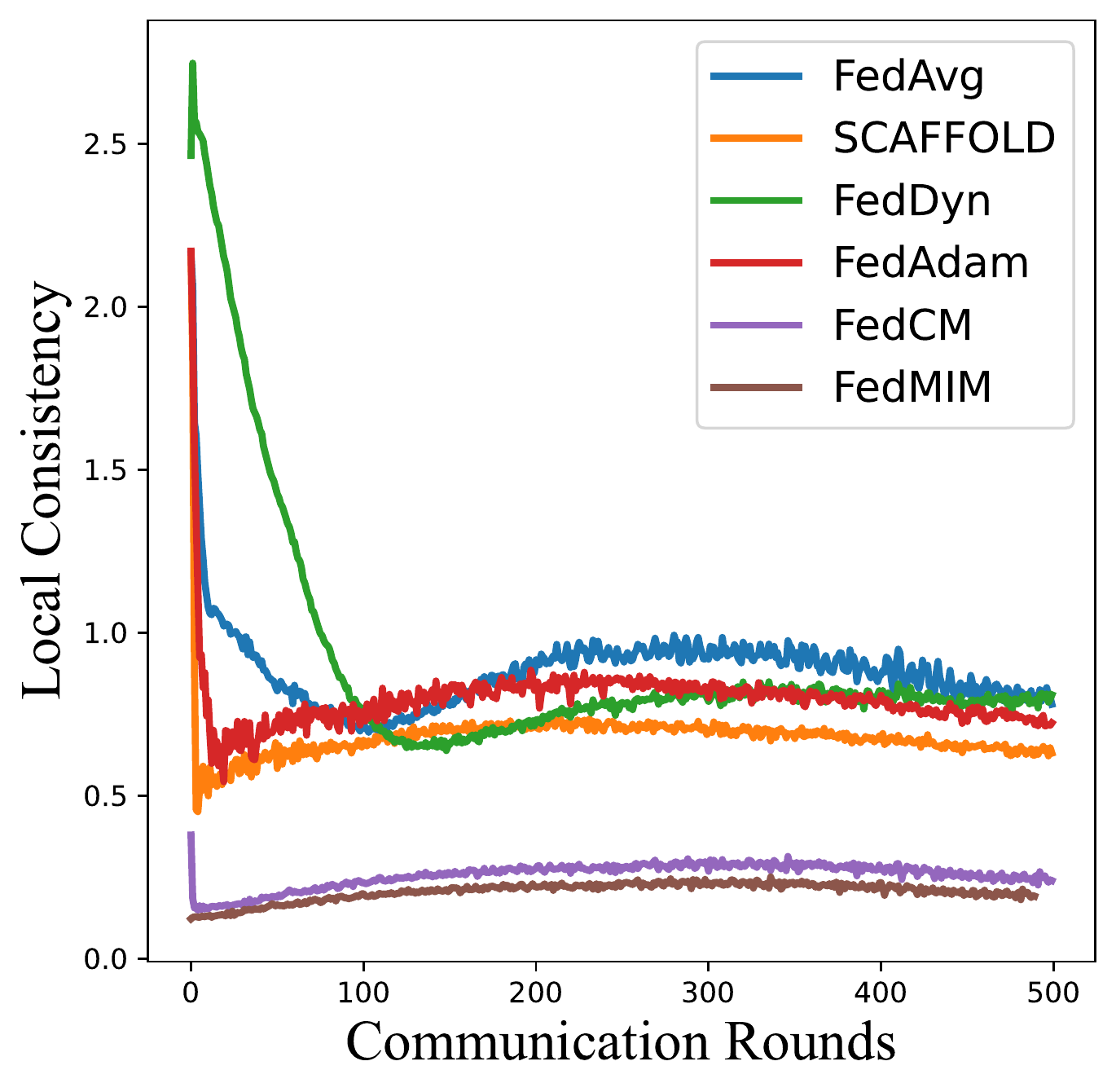}
    	\end{minipage}
    }\!\!\!\!\!
	\caption{(a) Comparison on different participation ratios on CIFAR-10 dataset and (b) comparison on different methods for the local consistency. We fix the other hyper-parameters. In (a), $L/R$ shows the ratio equals to $5\%/20\%$. In (b), $L/R$ shows the consistency under the CIFAR10/100 dataset.}
	\vskip -0.1in
	\label{exp:pp}
\end{figure*}

\subsection{Experimental Results}\label{exp_res}
\subsubsection{Comparison on increasing the heterogeneity.}\label{exp_noniid}
To explore the impact of introducing the heterogeneity, we select the three splitting methods on the dataset, including i.i.d., Dirichlet-0.3 and  Dirichlet-0.1. On the simple CIFAR-10/100 dataset, FedMIM achieves top-1 performance among the three heterogeneous settings. In the i.i.d. case on CIFAR-10, FedMIM achieves $86.39\%$ with $4.23\%$ over the FedAvg baseline. The second top performance of FedCM is $85.62\%$. When the heterogeneity is increased to DIR-0.3, FedCM drops from $85.62\%$ to $82.39\%$ with $~3.23\%$ loss, while FedMIM drops only $2\%$ and maintains the top-1 accuracy. The other methods like SCAFFOLD, FedDyn and FedSAM are affected at different levels. When we further enlarge the heterogeneity to DIR-0.1, the FedAdam is most affected and its accuracy drops from $83.19\%$ to $71.75\%$ with an approximate $12\%$ loss. On the large CIFAR-100 and TinyImagenet datasets, similar results can be observed. Our proposed FedMIM have the very stable test accuracy. In particular, when the heterogeneity is introduced to DIR-0.1 on CIFAR-100, FedMIM achieves only $1\%$ drops, which is far better than the others with at least $2\%$. Its performance is also better than the test accuracy on DIR-0.3 of most other baselines. In the i.i.d. splitting of TinyImagenet, FedAdam can achieve the top-1 performance. While when the heterogeneity increases, its performance drops rapidly and is even worse than FedAvg. FedMIM adopts the inertial momentum to the local training both on the iteration points and gradients and enhances the local consistency, which can efficiently resist heterogeneity. The multi-step makes the gradient changes more smooth during the training, even under the participation of some bad samples of clients whose dataset holds a very large difference, the long-step looking makes the approximation exact and stable for local training, which encourages the efficiency and robustness to the heterogeneity for the FedMIM.

\begin{table}[t]

\centering
\renewcommand{\arraystretch}{1}
\caption{Test accuracy ($\%$) after 1000 communication rounds on IID., Dirichlet-0.3 (DIR.3), and Dirichlet-0.1 (DIR.1) dataset.}

\setlength{\tabcolsep}{0.5mm}{\begin{tabular}{@{}cccccccccc@{}}
\toprule
\multicolumn{1}{c}{\multirow{2}{*}{Method}} & \multicolumn{3}{c}{CIFAR-10}       & \multicolumn{3}{c}{CIFAR-100}      & \multicolumn{3}{c}{TinyImagenet}  \\
\cmidrule(lr){2-10} 
                    & IID. & DIR.3 & DIR.1 & IID. & DIR.3 & DIR.1 & IID. & DIR.3 & DIR.1 \\ 
\cmidrule(lr){1-10}                 
FedAvg              & 82.16 & 80.19 & 75.92 & 42.77 & 42.18 & 40.88 & 30.31 & 28.61 & 27.93   \\
FedAdam             & 83.19 & 78.61 & 71.75 & 48.92 & 44.58 & 42.96 & \textbf{37.04} & 32.34 & 29.44   \\
SCAFFOLD            & 84.43 & 82.39 & 78.78 & 49.11 & 48.34 & 46.93 & 35.09 & 34.68 & 34.17    \\
FedCM               & 85.62 & 82.57 & 77.31 & 51.14 & 49.27 & 45.59 & 34.35 & 33.34 & 32.34   \\
FedDyn              & 84.23 & 81.44 & 75.66 & 46.87 & 45.77 & 43.93 & 33.91 & 31.97 & 30.11   \\
FedSAM              & 84.98 & 82.13 & 77.65 & 48.32 & 46.24 & 44.18 & 35.04 & 33.15 & 32.86   \\
\textbf{Ours}       & \textbf{86.39} & \textbf{84.39} & \textbf{80.82} & \textbf{52.83} & \textbf{50.53} & \textbf{49.20} & 36.47 & \textbf{35.17} & \textbf{34.83}   \\
\bottomrule
\end{tabular}}
\vspace{-0.3cm}
\end{table}

\begin{table}[t]
\centering
\renewcommand{\arraystretch}{1}
\caption{Test accuracy ($\%$) after the corresponding proper communication rounds and local epoch $E=1,5,10$ respectively.} 
\setlength{\tabcolsep}{0.5mm}{\begin{tabular}{@{}cccccccccc@{}}
\toprule
\multicolumn{1}{c}{\multirow{2}{*}{Method}} & \multicolumn{3}{c}{CIFAR-10}       & \multicolumn{3}{c}{CIFAR-100}      & \multicolumn{3}{c}{TinyImagenet}  \\
\cmidrule(lr){2-10} 
                    & E = 1 & E = 5 & E = 10 & E = 1 & E = 5 & E = 10 & E = 1 & E = 5 & E = 10 \\ 
\cmidrule(lr){1-10}                 
FedAvg              & 82.76 & 81.84 & 81.06 & 44.95 & 42.45 & 40.84 & 36.28 & 28.79 & 24.77   \\
FedAdam             & 83.56 & 81.42 & 80.21 & 51.50 & 46.16 & 44.18 & 38.25 & 32.14 & 21.89   \\
SCAFFOLD            & 85.03 & 84.12 & 83.54 & 48.35 & 48.85 & 43.25 & 37.21 & 34.87 & 26.75   \\
FedCM               & 86.33 & 84.89 & 82.83 & 52.11 & 50.42 & 48.14 & 40.63 & 33.70 & 25.82   \\
FedDyn              & 85.45 & 83.96 & 78.12 & 49.94 & 45.80 & 44.97 & 39.33 & 30.55 & 22.36   \\
FedSAM              & 84.48 & 84.17 & 82.97 & 49.18 & 46.79 & 45.33 & 38.11 & 32.94 & 25.16   \\
ours                & \textbf{86.68} & \textbf{85.18} & \textbf{84.13} & \textbf{55.77} & \textbf{51.37} & \textbf{50.13} & \textbf{40.95} & \textbf{36.82} & \textbf{28.32}   \\
\bottomrule
\end{tabular}}
\vspace{-0.3cm}
\end{table}

\subsubsection{Comparison on enlarging the local interval.}

To further explore the impact of enlarging the local interval $K$, we select the three different local intervals to test the performance of our proposed FedMIM and the other baselines. We follow the previous works~\cite{Feddyn,Fedcm,Fedagm} to compare the performance under different local epochs $E=1,5,10$. The total training samples in CIFAR-10/100 is 50,000 and they are split into 100 parts equally with 500 samples from a local client. To fairly compare with the others, we fix the batchsize as 50, which means the local iteration is TrainSamples/Batchsize~$=$~10 per epoch. It should be noted that in the proof the local interval $K$ corresponds to the iteration. When the $E=1$ with a short local interval, local training do not introduce more local heterogeneity to the global view. When the local epochs are enlarged to $10$, the long local update exacerbates the inconsistency problem and shows a negative impact on the test accuracy. Especially on the large TinyImagenet dataset, most algorithms fail to converge at $T=1000$. Thus the test accuracy could be considered as the convergence rate for all the methods. FedMIM achieves the top-1 accuracy on both short epochs and long epochs. In the local iteration, the inertial momentum which promotes the local consistency, plays an important role in the stochastic gradient estimation. FedMIM obtains the iterative point closer to the global iterative point via perturbing the local gradient, which approximates the global direction and updates it by one step gradient descent. This allows the local update to be corrected not only on the gradient term, but also on the iterative points where the gradient is calculated. In the next part, we will discuss the consistency between the baselines and some ablation studies, including the participation ratio, the selection of the $\alpha_{j}$ and $\beta_{j}$ and the different multi-steps.

\subsection{Ablation Studies and Hyper-parameter Sensitivity}\label{exp_abl}
\subsubsection{Participation Ratio}
We test the experiments on the CIFAR-10 dataset under different participation ratios, which are selected from $5\%, 20\%$ to test the convergence rate under the setups of fixed local epoch $5$ and batchsize $50$. The heterogeneity is set as DIR-0.1. From the Fig.\ref{exp:pp} (a), when the heterogeneity is enlarged, the convergence speed of FedAvg loses the most performance. FedAdam, FedCM, and FedDyn show a high sensitivity to the participation ratios. SCAFFOLD performs well and maintains the excellent generalization performance via the variance reduction technique under the higher participation option. Beneficial from the inertial momentum, the global direction could be exactly estimated. And a multi-steps calculation is adopted to further enhance the stability and smooth characteristic in the estimation. Our proposed FedMIM shows a very stable performance both on different heterogeneity and participation ratios, especially on the extreme heterogeneous settings.

\subsubsection{Selection of Multi-Steps}
Considering the limited storage and computation ability of the edge devices, We choose none inertial step, one inertial step and two inertial steps in this scenario. We test the different selection of steps $J$ and selection of $\alpha_{j}$, $\beta_{j}$. The experimental setup is: local epochs $5$, total communication rounds $500$ and batchsize is fixed as $50$.  When $J=2$, FedMIM achieves the best generalization performance. As shown in Table~\ref{fig:different_selection}, if we set $\alpha_{j}=0$ and $\beta_{j}=0$, FedMIM degenerates to the FedAvg method. And if we set $\alpha_{2}=0$ and $\beta_{j}=0$, FedMIM degenerates to the FedCM method. It shows that the $\beta_{j}$ with a long history is not a good selection for the local clients, due to the expired information before the current time. Local updates will be misled by the redundancy of the invalid offset. The adjacent update is the most important. While the $\alpha_{j}$ is more relaxed, which can be searched from the last two or three steps. In the empirical studies, we recommend the selection can be decided by different indicators, a large $\alpha_{j}$ and a proper $\beta_{j}$ are better.

\begin{table}[H]
\vspace{-0.2cm}
\footnotesize
    \caption{Selection of $\alpha_{j}$ and $\beta_{j}$.}
    \vspace{-0.2cm}
    \label{fig:different_selection}
    \centering
    \begin{tabular}{ccccc}
   \toprule
     $\alpha_{1}$ & $\alpha_{2}$ & $\beta_{1}$ & $\beta_{2}$ & Accuracy($\%$) \\
    \midrule
     - & - & - & - &  81.61 \\
     0.9 & - & - & - & 84.14 \\
     0.5 & - & - & - & 83.53 \\
     0.8 & 0.1 & - & - & 84.29 \\
     0.6 & 0.3 & - & - & 84.67 \\
     0.3 & 0.6 & - & - & 83.86 \\
     \textbf{0.6} & \textbf{0.3} & \textbf{0.9} & \textbf{0.1} & \textbf{84.98} \\
     0.6 & 0.3 & 0.5 & 0.5 & 84.42 \\
     0.6 & 0.3 & 0.1 & 0.9 & 84.04 \\
    \bottomrule
\end{tabular}
    \vspace{-0.2cm}
\end{table}

\subsubsection{Local Consistency}
We test the consistency during the training as $\frac{1}{S}\sum_{i}\Vert\mathbf{x}_{i,K}^{t}-\mathbf{x}^{t+1}\Vert^{2}$ where $\mathbf{x}^{t+1}=\frac{1}{S}\sum_{i}\mathbf{x}_{i,K}^{t}$. In the practical training, the local models can not approach the true local optimal due to the limitation of local interval $K$, thus all the $\mathbf{x}_{i,K}^{t}$ will represent for the dispersion from the global model $\mathbf{x}^{t+1}$. To keep the $\mathbf{x}_{i,K}^{t}$ close to each other can improve the resistance to the local heterogeneity (the idealized case is that all local clients always generate the same parameters per round). Fig.\ref{exp:pp} (b) shows the empirical results of the consistency on the different dataset, FedMIM handles the more excellent efficiency on maintaining the local similarity than the other baselines on the both CIFAR-10/100.

\section{Conclusion}
In this work, we propose a novel federated algorithm, named FedMIM, which utilizes the multi-step inertial momentum to guide the local training on the heterogeneous clients both on the gradient estimation and the iterative point for gradient calculation. We also theoretically prove that the proposed FedMIM achieves $\mathcal{O}(\frac{1}{\sqrt{SKT}})$ under the smoothness assumptions and $\mathcal{O}(\frac{1}{T})$ under the PL inequality, under the non-convex cases. FedMIM can efficiently improve the local consistency to mitigate the influence from the heterogeneous dataset. We conduct extensive experiments to demonstrate the significant performance of our proposed FedMIM on the real-world dataset. Furthermore, we learn some ablation studies to verify the stability under different setups.


\bibliographystyle{IEEEtran}
\bibliography{bibilio.bib}


\appendix
\subsection{Preliminary Lemmas}\label{Preliminary Lemmas}

We firstly state some preliminary lemmas as follows:

\begin{lemma}\label{lem1}
For $\rvv_{1}, \rvv_{2}, \cdots, \rvv_{n} \in \mathbb{R}^{d}$, by Jensen inequality, we have
\begin{equation}
\small
  \left\|\sum_{i=1}^{n} \rvv_{i}\right\|^{2} \leq n \sum_{i=1}^{n}\left\|\rvv_{i}\right\|^{2}.
\end{equation}
\end{lemma}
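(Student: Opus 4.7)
The plan is to prove this as a direct consequence of the convexity of the squared Euclidean norm $\|\cdot\|^{2}$, which is the standard Jensen-inequality or Cauchy--Schwarz argument and requires no problem-specific machinery.

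First, I would rewrite the sum as $n$ times its average, $\sum_{i=1}^{n}\rvv_{i} = n\cdot\frac{1}{n}\sum_{i=1}^{n}\rvv_{i}$, so that the left-hand side becomes $n^{2}\left\|\frac{1}{n}\sum_{i=1}^{n}\rvv_{i}\right\|^{2}$. Applying Jensen's inequality to the convex function $\phi(\rvv)=\|\rvv\|^{2}$ with uniform weights $1/n$ yields $\left\|\frac{1}{n}\sum_{i=1}^{n}\rvv_{i}\right\|^{2} \leq \frac{1}{n}\sum_{i=1}^{n}\|\rvv_{i}\|^{2}$, and multiplying through by $n^{2}$ gives exactly the claimed bound. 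Since the lemma statement explicitly invokes Jensen, this is the cleanest presentation to include in the appendix.

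As a sanity check I would also note two equivalent derivations. One expands the squared norm as a double sum, $\left\|\sum_{i=1}^{n}\rvv_{i}\right\|^{2} = \sum_{i=1}^{n}\sum_{j=1}^{n}\langle\rvv_{i},\rvv_{j}\rangle$, and applies the AM--GM estimate $\langle\rvv_{i},\rvv_{j}\rangle \leq \frac{1}{2}(\|\rvv_{i}\|^{2}+\|\rvv_{j}\|^{2})$ term by term; summing over $i,j$ produces precisely $n\sum_{i=1}^{n}\|\rvv_{i}\|^{2}$. The other views $\sum_{i}\rvv_{i}$ as the inner product $\langle\vone,(\rvv_{1},\dots,\rvv_{n})\rangle$ in the product Hilbert space $(\mathbb{R}^{d})^{n}$ and invokes Cauchy--Schwarz with $\|\vone\|^{2}=n$.

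There is essentially no technical obstacle: this is a textbook inequality and each derivation is one or two lines. The only real decision is stylistic, and in the final write-up I would state only the Jensen version so that the later lemmas in the appendix can cite Lemma~\ref{lem1} without the reader needing to recall any ancillary computation.
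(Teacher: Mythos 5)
Your proof is correct and follows exactly the route the paper intends: Lemma~\ref{lem1} is stated in the appendix without a written proof, with the phrase ``by Jensen inequality'' standing in for precisely the argument you give (convexity of $\|\cdot\|^{2}$ with uniform weights $1/n$). The alternative derivations you sketch are fine but unnecessary; the one-line Jensen version suffices.
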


\begin{lemma}\label{lem2}
For $\rvv_{1}, \rvv_{2} \in \mathbb{R}^{d}$, we have
\begin{equation}
\small
\left\|\rvv_{1}+\rvv_{2}\right\|^{2} \leq(1+a)\left\|\rvv_{1}\right\|^{2}+\left(1+\frac{1}{a}\right)\left\|\rvv_{2}\right\|^{2},
\end{equation}
for any $a > 0$.
\end{lemma}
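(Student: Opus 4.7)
The plan is to expand the squared norm using the inner-product identity and then bound the cross term with a scaled Cauchy–Schwarz / Young-style inequality that is itself a consequence of the nonnegativity of a suitably weighted square. First I would write
\begin{equation*}
\|\rvv_1+\rvv_2\|^2 = \|\rvv_1\|^2 + 2\langle \rvv_1,\rvv_2\rangle + \|\rvv_2\|^2,
\end{equation*}
reducing the problem to controlling the cross term $2\langle \rvv_1,\rvv_2\rangle$.

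Next I would derive the scalar Young-type bound $2\langle \rvv_1,\rvv_2\rangle \leq a\|\rvv_1\|^2 + \tfrac{1}{a}\|\rvv_2\|^2$ for any $a>0$. This follows by observing that, for any $a>0$, the vector $\sqrt{a}\,\rvv_1 - \tfrac{1}{\sqrt{a}}\rvv_2$ has nonnegative squared norm; expanding $\|\sqrt{a}\,\rvv_1 - \tfrac{1}{\sqrt{a}}\rvv_2\|^2 \geq 0$ gives $a\|\rvv_1\|^2 - 2\langle \rvv_1,\rvv_2\rangle + \tfrac{1}{a}\|\rvv_2\|^2 \geq 0$, which is exactly the desired inequality after rearrangement.

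Substituting this cross-term bound into the expansion yields
\begin{equation*}
\|\rvv_1+\rvv_2\|^2 \leq \|\rvv_1\|^2 + a\|\rvv_1\|^2 + \tfrac{1}{a}\|\rvv_2\|^2 + \|\rvv_2\|^2 = (1+a)\|\rvv_1\|^2 + \Bigl(1+\tfrac{1}{a}\Bigr)\|\rvv_2\|^2,
\end{equation*}
which is the claim. No obstacle is really expected here: the proof is a two-line exercise once one has identified the correct weighting $\sqrt{a}$ versus $1/\sqrt{a}$. The only subtlety worth flagging is making sure $a>0$ so that $\sqrt{a}$ and $1/\sqrt{a}$ are well-defined; the inequality is trivial when either $\rvv_1$ or $\rvv_2$ vanishes, and the case $\rvv_1=-\rvv_2$ shows the right-hand side is minimized over $a>0$ precisely at $a=\|\rvv_2\|/\|\rvv_1\|$, recovering the equality $\|\rvv_1+\rvv_2\|^2 = (\|\rvv_1\|+\|\rvv_2\|)^2$ in the degenerate colinear case, which is a useful sanity check.
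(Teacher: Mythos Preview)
Your proof is correct and entirely standard: expand the square and bound the cross term via the nonnegativity of $\|\sqrt{a}\,\rvv_1 - \tfrac{1}{\sqrt{a}}\rvv_2\|^2$. The paper itself does not supply a proof of this lemma; it is simply listed among the preliminary lemmas and used later, so there is nothing to compare against.

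One small slip in your closing sanity check: you write ``the case $\rvv_1=-\rvv_2$'' as the equality case, but if $\rvv_1=-\rvv_2$ the left-hand side is zero. Equality at the optimal $a=\|\rvv_2\|/\|\rvv_1\|$ actually occurs when $\rvv_1$ and $\rvv_2$ are \emph{positively} proportional, i.e.\ $\rvv_2=a\,\rvv_1$, which indeed gives $\|\rvv_1+\rvv_2\|^2=(\|\rvv_1\|+\|\rvv_2\|)^2$. This does not affect the validity of the proof itself.
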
 

\begin{lemma}\label{lem3}
For $\rvv_{1}, \rvv_{2}, \cdots, \rvv_{n} \in \mathbb{R}^{d}$  we have
\begin{equation}
\scriptsize
\left\|\left(1-\sum_{i=1}^{N}\alpha_i\right)\rvv_{0}+\sum_{i=1}^{N}\alpha_i\rvv_{i}\right\|^{2} \leq\left(1-\sum_{i=1}^{N}\alpha_i\right)\left\|\rvv_{0}\right\|^{2}+\sum_{i=1}^{N}\alpha_i\left\|\rvv_{i}\right\|^{2},
\end{equation}
for $\alpha_{1}, \alpha_{2}, \cdots, \alpha_{n} \in (0,1)$ and $0<\sum_{i=1}^{N}\alpha_i<1$.
\end{lemma}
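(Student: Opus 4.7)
The plan is to recognize this as a direct instance of Jensen's inequality applied to the convex function $\varphi(\rvv) = \|\rvv\|^2$ on $\mathbb{R}^d$. Since $\alpha_1,\ldots,\alpha_N \in (0,1)$ and $0 < \sum_{i=1}^N \alpha_i < 1$, I would introduce the auxiliary weight $\alpha_0 := 1 - \sum_{i=1}^N \alpha_i \in (0,1)$ so that the coefficients $\{\alpha_i\}_{i=0}^N$ form a valid probability distribution, i.e., they are nonnegative and sum to one. The left-hand side can then be rewritten as $\bigl\|\sum_{i=0}^N \alpha_i \rvv_i\bigr\|^2 = \varphi\bigl(\sum_{i=0}^N \alpha_i \rvv_i\bigr)$, and the claim reduces to $\varphi(\sum_i \alpha_i \rvv_i) \leq \sum_i \alpha_i \varphi(\rvv_i)$.

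The key step is establishing convexity of $\varphi$. Since $\nabla^2\varphi(\rvv) = 2\rmI \succeq 0$, the function is convex, so Jensen's inequality applies and yields the desired bound directly. Alternatively, I would give an elementary self-contained argument by induction on $N$: the base case $N=1$ reads $\|(1-\alpha)\rvv_0 + \alpha\rvv_1\|^2 \leq (1-\alpha)\|\rvv_0\|^2 + \alpha\|\rvv_1\|^2$, which follows by expanding both sides and observing that the difference is exactly $-\alpha(1-\alpha)\|\rvv_0 - \rvv_1\|^2 \leq 0$. For the inductive step, I would group the first $N$ terms as $\bigl(\sum_{i=1}^{N-1}\alpha_i + \alpha_0\bigr)\rvw + \alpha_N \rvv_N$ where $\rvw$ is the convex combination of $\rvv_0,\ldots,\rvv_{N-1}$ rescaled by $1/(1-\alpha_N)$, apply the base case, and then invoke the inductive hypothesis on the inner combination.

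The statement is standard and the argument is routine; there is no real obstacle beyond verifying that the coefficients indeed form a probability vector (which is guaranteed by the stated hypothesis $0 < \sum_{i=1}^N \alpha_i < 1$). A one-line proof via Jensen is sufficient, but I would include the elementary expansion for transparency so the lemma is self-contained and usable in the subsequent convergence analysis where it is invoked repeatedly to split squared norms of convex combinations arising from the multi-step inertial iterates $\rvy^t_{i,k,1}$ and $\rvy^t_{i,k,2}$.
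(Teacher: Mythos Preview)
Your proposal is correct. The paper itself states this lemma as a preliminary fact without proof, so there is nothing to compare against; your Jensen's-inequality argument (introducing $\alpha_0 := 1-\sum_{i=1}^N\alpha_i$ so the weights form a probability vector and using convexity of $\|\cdot\|^2$) is exactly the standard justification and is fully sufficient.
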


\begin{lemma}\label{lem4}
Let $\rvx_{1}, \rvx_{2}, \cdots, \rvx_{n}$ be random variables in $\mathbb{R}^{d}$. Suppose that $\left\{\rvx_{i}-\zeta_{i}\right\}$ form a martingale difference sequence, i.e. $\mathbb{E}\left[\rvx_{i}-\zeta_{i} \mid \rvx_{1} \cdots \rvx_{i-1}\right]=0$. If $\mathbb{E}\left\|\rvx_{i}-\zeta_{i}\right\|^{2} \leq \sigma^{2}$, then we have
\begin{equation}
\small
\mathbb{E}\left\|\sum_{i=1}^{n} \rvx_{i}\right\|^{2} \leq 2\left\|\sum_{i=1}^{n} \zeta_{i}\right\|^{2}+2 n \sigma^{2}.
\end{equation}
\end{lemma}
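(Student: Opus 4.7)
My plan is to reduce the claim to a standard bias-variance split. Writing $\sum_{i=1}^n \rvx_i = \sum_{i=1}^n \zeta_i + \sum_{i=1}^n (\rvx_i - \zeta_i)$, I first apply Lemma~\ref{lem2} with $a=1$ to obtain
\begin{equation}
\small
\left\|\sum_{i=1}^{n}\rvx_i\right\|^2 \le 2\left\|\sum_{i=1}^{n}\zeta_i\right\|^2 + 2\left\|\sum_{i=1}^{n}(\rvx_i-\zeta_i)\right\|^2.
\end{equation}
Taking expectations, the whole task reduces to showing that the ``centered'' sum satisfies $\mathbb{E}\left\|\sum_i(\rvx_i-\zeta_i)\right\|^2 \le n\sigma^2$.

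To do that, I would expand the squared norm into self-inner-products and cross-terms,
\begin{equation}
\small
\mathbb{E}\left\|\sum_{i=1}^{n}(\rvx_i-\zeta_i)\right\|^2 = \sum_{i=1}^{n}\mathbb{E}\|\rvx_i-\zeta_i\|^2 + 2\sum_{i<j}\mathbb{E}\langle \rvx_i-\zeta_i,\,\rvx_j-\zeta_j\rangle,
\end{equation}
and kill the cross-terms via the tower property of conditional expectation. For $i<j$, conditioning on $\rvx_1,\ldots,\rvx_{j-1}$ freezes the first factor, so $\mathbb{E}\langle \rvx_i-\zeta_i,\,\rvx_j-\zeta_j\rangle = \mathbb{E}\langle \rvx_i-\zeta_i,\,\mathbb{E}[\rvx_j-\zeta_j\mid \rvx_1,\ldots,\rvx_{j-1}]\rangle = 0$ by the martingale-difference hypothesis. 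What remains is $\sum_i \mathbb{E}\|\rvx_i-\zeta_i\|^2 \le n\sigma^2$ by the variance bound.

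Combining the two displays then yields $\mathbb{E}\left\|\sum_i \rvx_i\right\|^2 \le 2\left\|\sum_i \zeta_i\right\|^2 + 2n\sigma^2$ (with the understanding that $\|\sum_i \zeta_i\|^2$ on the right is either deterministic or taken under expectation, as the application dictates). There is no genuine obstacle here—this is a textbook decoupling argument—so the ``hard part'' is really just bookkeeping: being careful that the filtration used when invoking the tower property is exactly the one with respect to which $\{\rvx_i-\zeta_i\}$ is a martingale difference, so that $\zeta_j$ is measurable with respect to $\sigma(\rvx_1,\ldots,\rvx_{j-1})$ and can be pulled out of the inner conditional expectation. With that verified, the bound follows in a few lines.
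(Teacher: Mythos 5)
Your proof is correct. The paper states Lemma~\ref{lem4} as a preliminary fact without supplying a proof, so there is nothing to compare against; your argument --- the $\|a+b\|^2\le 2\|a\|^2+2\|b\|^2$ split followed by cancellation of the cross terms $\mathbb{E}\langle \rvx_i-\zeta_i,\rvx_j-\zeta_j\rangle$ via the tower property --- is the standard derivation the paper implicitly relies on, and you correctly flag the two points that need care (the measurability of $\zeta_i$ with respect to $\sigma(\rvx_1,\dots,\rvx_{j-1})$ for $i<j$, and the missing expectation on $\bigl\|\sum_i\zeta_i\bigr\|^2$ when the $\zeta_i$ are random, which is in fact a minor imprecision in the lemma statement itself).
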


\subsection{Key lemmas}\label{key lemmas}

\noindent First we define the auxiliary sequence ${\rvu_{t}}$\\
\begin{equation}
\small
\rvu_{t} = \rvx_t + \frac{1}{A}\sum_{j\in I}\sum_{s=j}^{J-1}{\alpha_{s}}(\rvx_{t-j}-\rvx_{t-j-1}).
\end{equation}
where $A=1-\sum_{j\in I}\alpha_{j}$.

\begin{lemma}\label{lem:u update}
By defining the auxiliary sequence ${\rvu_{t}}$,
the following update rule holds for $\{\rvu_{t}\}$ satisfying the scaled learning rate $\eta=K\eta_{l}$:
\begin{equation}
\small
\rvu_{t+1}=\rvu_{t}-\eta\frac{1}{SK}\sum_{i \in \mathcal{S}} \sum_{k \in[K]} \rvg_{i, k}^{t}.
\end{equation}
\end{lemma}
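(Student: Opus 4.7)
The plan is to (i) unfold the local update of Algorithm~\ref{local} to obtain a closed form for $\rvx_{t+1}-\rvx_{t}$, then (ii) use an Abel summation on the double sum defining $\rvu_{t}$ to show that the correction with weights $c_{j}=\sum_{s=j}^{J-1}\alpha_{s}$ was chosen precisely so that the historical drift terms cancel against the inertial part of the primal update, leaving only the stochastic-gradient sum scaled by $\eta=K\eta_{l}$.

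For step (i), substituting $\rvy^{t}_{i,k,1}=\rvx^{t}_{i,k}-\sum_{j\in I}\alpha_{j}\rvdelta_{t-j}$ into $\rvx^{t}_{i,k+1}=\rvy^{t}_{i,k,1}-A\eta_{l}\rvg^{t}_{i,k}$ with $A=1-\sum_{j\in I}\alpha_{j}$ and telescoping for $k=0,\dots,K-1$ with $\rvx^{t}_{i,0}=\rvx_{t}$ gives
\begin{equation}
\small
\rvx^{t}_{i,K}=\rvx_{t}-K\sum_{j\in I}\alpha_{j}\rvdelta_{t-j}-A\eta_{l}\sum_{k=0}^{K-1}\rvg^{t}_{i,k}.
\end{equation}
Averaging over $i\in\mathcal{S}_{t}$ and substituting $\rvdelta_{t-j}=-(\rvx_{t-j}-\rvx_{t-j-1})/K$, I then obtain
\begin{equation}
\small
\rvx_{t+1}-\rvx_{t}=\sum_{j\in I}\alpha_{j}(\rvx_{t-j}-\rvx_{t-j-1})-\frac{A\eta_{l}}{S}\sum_{i\in\mathcal{S}_{t}}\sum_{k\in[K]}\rvg^{t}_{i,k}.
\end{equation}

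For step (ii), I denote $c_{j}=\sum_{s=j}^{J-1}\alpha_{s}$ so that $c_{0}=1-A$ and $c_{j-1}-c_{j}=\alpha_{j-1}$, and write the auxiliary correction as $\frac{1}{A}\sum_{j\in I}c_{j}(\rvx_{t-j}-\rvx_{t-j-1})$. The increment $\rvu_{t+1}-\rvu_{t}$ then equals $(\rvx_{t+1}-\rvx_{t})+\frac{1}{A}\sum_{j\in I}c_{j}\bigl[(\rvx_{t+1-j}-\rvx_{t-j})-(\rvx_{t-j}-\rvx_{t-j-1})\bigr]$. Applying summation by parts to this $j$-indexed telescoping sum collapses the interior differences into $-\sum_{j\in I}\alpha_{j}(\rvx_{t-j}-\rvx_{t-j-1})$ and leaves a boundary term $c_{0}(\rvx_{t+1}-\rvx_{t})$; combining with the leading $(\rvx_{t+1}-\rvx_{t})$ and using $1+c_{0}/A=1/A$ (since $A+c_{0}=1$) yields
\begin{equation}
\small
\rvu_{t+1}-\rvu_{t}=\frac{1}{A}(\rvx_{t+1}-\rvx_{t})-\frac{1}{A}\sum_{j\in I}\alpha_{j}(\rvx_{t-j}-\rvx_{t-j-1}).
\end{equation}

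Substituting the step-(i) expression for $\rvx_{t+1}-\rvx_{t}$ into this identity, the two $\sum_{j\in I}\alpha_{j}(\rvx_{t-j}-\rvx_{t-j-1})$ memory terms cancel exactly, leaving $\rvu_{t+1}-\rvu_{t}=-\frac{\eta_{l}}{S}\sum_{i\in\mathcal{S}_{t}}\sum_{k\in[K]}\rvg^{t}_{i,k}$, which upon scaling by $K$ and using $\eta=K\eta_{l}$ matches the claimed update. The main obstacle is getting the Abel-summation indices to line up correctly so that the boundary coefficient collapses to $1/A$ and the interior sum reassembles into precisely the $\alpha_{j}$-weighted combination produced by the primal update in step (i); this exact matching is the algebraic reason for the particular weights $\sum_{s=j}^{J-1}\alpha_{s}$ in the definition of $\rvu_{t}$, and the rest of the argument is bookkeeping.
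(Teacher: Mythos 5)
Your proof is correct and follows essentially the same route as the paper's: the paper likewise telescopes the $\sum_{s=j}^{J-1}\alpha_s$-weighted correction in $\rvu_t$ to get $\rvu_{t+1}-\rvu_t=-K\bigl(\tfrac{1}{A}\delta_{t+1}-\sum_{j\in I}\tfrac{\alpha_j}{A}\delta_{t-j}\bigr)$ and then cancels the memory terms using the unrolled local update (which it packages as the separate recursion $\delta_{t+1}=A\tilde{\delta}_t+\sum_{j\in I}\alpha_j\delta_{t-j}$ in Lemma~\ref{lem:average_client_gradient}, whereas you inline that derivation). Your version simply makes the Abel-summation bookkeeping explicit where the paper asserts it in one line; the substance is identical.
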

\begin{proof}
\begin{equation}
\small
\begin{aligned}
&\quad\rvu_{t+1}\\ &=\rvx_{t+1}+\frac{1}{A}\sum_{j\in I}\sum_{s=j}^{J-1}{\alpha_{s}}(\rvx_{t-j+1}-\rvx_{t-j}) \\
&=\rvx_{t}-K \delta_{t+1}+\frac{1}{A}\sum_{j\in I}\sum_{s=j}^{J-1}{\alpha_{s}}\left(-K \delta_{t-j+1}\right) \\
&=\rvu_{t}-K\left(\frac{1}{A} \delta_{t+1}-\sum_{j\in I}\frac{\alpha_{j}}{A} \delta_{t-j}\right) \\
&=\rvu_{t}-\frac{\eta_l}{S}\sum_{i \in \mathcal{S}} \sum_{k\in [K]} \rvg_{i, k}^{t}
=\rvu_{t}-\eta\frac{1}{SK}\sum_{i \in \mathcal{S}} \sum_{k \in[K]} \rvg_{i, k}^{t}.
\end{aligned}
\end{equation} 
\end{proof}
By introducing $\rvu_t$, the updating role becomes similar to conventional federated learning expression. 

Next, we define $\tilde{\delta}_{t}$ as average of local iteration gradient.
\begin{equation}
\tilde{\delta}_{t}=\frac{\eta_l}{SK}\sum_{i \in \mathcal{S}} \sum_{k\in[K]} \rvg_{i, k}^{t},    
\end{equation} 
\begin{lemma}\label{lem:average_client_gradient}
Suppose $\rvg_{i, k}^{t}$ is computed in the algorithm.
Then, for $\{\delta_t\}$ computed by Algorithm, we have 
\begin{equation}
\small
\delta_{t+1}=A\tilde{\delta}_{t} +\sum_{j\in I}\alpha_{j} \delta_{t-j},    
\end{equation}
\end{lemma}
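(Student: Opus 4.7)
The plan is to derive the recursion for $\delta_{t+1}$ by unrolling one full round of local updates and then using the definition $\delta_{t+1}=-(\rvx_{t+1}-\rvx_{t})/K$. The crucial structural observation that makes the calculation short is that the momentum correction $\sum_{j\in I}\alpha_{j}\delta_{t-j}$ appearing inside the local step is constant in the local index $k$ (it depends only on the round $t$ and on past $\delta$ values stored on the client), so it contributes additively and simply multiplies by $K$ under a telescoping sum.

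First, I would rewrite the local update from Algorithm~\ref{local}, combining lines~8 and~12, as
\begin{equation}
\small
\rvx_{i,k+1}^{t} \;=\; \rvx_{i,k}^{t} \;-\; \sum_{j\in I}\alpha_{j}\,\delta_{t-j} \;-\; A\,\eta_{l}\,\rvg_{i,k}^{t},
\end{equation}
with $A=1-\sum_{j\in I}\alpha_{j}$. Telescoping over $k=0,1,\dots,K-1$ and using $\rvx_{i,0}^{t}=\rvx_{t}$ gives
\begin{equation}
\small
\rvx_{i,K}^{t} \;=\; \rvx_{t} \;-\; K\sum_{j\in I}\alpha_{j}\,\delta_{t-j} \;-\; A\,\eta_{l}\sum_{k=0}^{K-1}\rvg_{i,k}^{t}.
\end{equation}

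Next I would average this identity over the active clients $i\in\mathcal{S}_{t}$ of size $S$, which by the server aggregation rule in line~17 of Algorithm~\ref{local} produces $\rvx_{t+1}$ on the left-hand side:
\begin{equation}
\small
\rvx_{t+1} \;=\; \rvx_{t} \;-\; K\sum_{j\in I}\alpha_{j}\,\delta_{t-j} \;-\; \frac{A\,\eta_{l}}{S}\sum_{i\in\mathcal{S}_{t}}\sum_{k=0}^{K-1}\rvg_{i,k}^{t}.
\end{equation}
Dividing by $-K$ and applying the definitions $\delta_{t+1}=-(\rvx_{t+1}-\rvx_{t})/K$ and $\tilde{\delta}_{t}=\frac{\eta_{l}}{SK}\sum_{i\in\mathcal{S}_{t}}\sum_{k\in[K]}\rvg_{i,k}^{t}$ immediately recovers the claimed identity $\delta_{t+1}=A\tilde{\delta}_{t}+\sum_{j\in I}\alpha_{j}\delta_{t-j}$.

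There is no real technical obstacle here; the lemma is essentially a bookkeeping identity that isolates the two pieces comprising the round-level increment: a convex-like combination of past momenta (weighted by the $\alpha_{j}$) and the current stochastic contribution (weighted by $A$). The only thing to be slightly careful about is that the $\delta_{t-j}$ terms come out of the inner sum unchanged, so sampling randomness enters only through $\tilde{\delta}_{t}$; this is exactly the structure that will later let Lemma~\ref{lem:u update} interpret $\{\rvu_{t}\}$ as an SGD-like sequence driven by $\tilde{\delta}_{t}$ with scaled step size $\eta=K\eta_{l}$.
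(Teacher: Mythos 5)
Your proposal is correct and follows essentially the same route as the paper: both unroll the combined local update $\rvx_{i,k+1}^{t}=\rvx_{i,k}^{t}-\sum_{j\in I}\alpha_{j}\delta_{t-j}-A\eta_{l}\rvg_{i,k}^{t}$ over $k$, average over the sampled clients via the aggregation rule, and divide by $-K$ to read off $\delta_{t+1}=A\tilde{\delta}_{t}+\sum_{j\in I}\alpha_{j}\delta_{t-j}$. Your explicit telescoping step and the remark that the momentum correction is constant in $k$ simply make transparent what the paper compresses into its step $(a)$.
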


\begin{proof}
\begin{equation}
\small
\begin{aligned}
\delta_{t+1} &=-\frac{1}{K} ({  \rvx_{t+1}-\rvx_t}) = -\frac{1}{S K} \sum_{i \in \mathcal{S}}\left(\rvx_{i, K}^{t}-\rvx_{t}\right) \\
&\overset{(a)}{=}\frac{1}{S K} \sum_{i \in \mathcal{S}} \sum_{k \in[K]}\left(\eta_{l}A \rvg_{i, k}^{t}+\sum_{j\in I}\alpha_{j} \delta_{t-j}\right) \\
&= A\tilde{\delta}_{t} +\sum_{j\in I}\alpha_{j} \delta_{t-j}.
\end{aligned}
\end{equation}
In the above, $(a)$ holds for $\rvx_{i,K}^{t}=\rvx_{t}$ if client $i$ is not selected in $t$-th round, i.e, $i\notin \mathcal{S}_t$. Let further discuss some properties of the term $\delta_t$. According to the above lemma, let assume the following bound $\left\|\tilde{\delta}_t\right\|^2\le \eta_{l}^{2}C$ hold and then we have $\mathbb{E}\left\|\delta_{t}\right\|^2 \leq \eta_{l}^{2}C$ for $\forall t\geq 1$. The proof is as following: we note that $\delta_0= \delta_{-1} =\cdots= 0$; $\alpha_{1}, \alpha_{2}, \cdots, \alpha_{n} \in [0,1)$ and $0<\sum_{i=1}^{N}\alpha_i<1$, and
\begin{equation}
\small
\begin{aligned}
 \mathbb{E}\left\|\delta_t\right\|^2 &= \mathbb{E}\left\|A\tilde{\delta}_{t-1} +\sum_{j\in I}\alpha_{j} \delta_{t-j-1}\right\|^2\\
 &\leq A\mathbb{E}\left\|\tilde{\delta}_{t-1}\right\|^2 + \sum_{j\in I}\alpha_j\mathbb{E}\left\|\delta_{t-j-1}\right\|^2.   
\end{aligned}
\end{equation}
 For $t=1$ the upper bound satisfies; For $t=2$, $\mathbb{E}\left\|\delta_{2}\right\|^2 \leq A\mathbb{E}\left\|\tilde{\delta}_{1}\right\|^2 + \alpha_0 \mathbb{E}\left\|\delta_{1}\right\|^2\leq \eta_{l}^{2}C$, and when $t\geq 3$, $\mathbb{E}\left\|\delta_{t}\right\|^2 \leq A\mathbb{E}\left\|\tilde{\delta}_{t-1}\right\|^2 + \sum_{j\in I}\alpha_j\mathbb{E}\left\|\delta_{t-j-1}\right\|^2\leq \eta_{l}^{2}C$. About the conditions, usually it satisfies the conditions mentioned in \cite{Fedprox,Feddyn}. It could be followed that if we call a local SGD optimizer to update and search for the $\epsilon$-inexact solution in the clients. The best performing parameters is the $\rvx_{i}^{*}$ which is the local stationary point. Then the local offset $\rvx_{i}^{t}-\rvx_{t}$ could be bounded by the total offset of gradients multiplied by $\eta_{l}$ and the $l_2$-norm of the averaged local offset can easily adopt the upper bound as well.

\end{proof}

\begin{lemma}\label{lem:bounded gradient}
Suppose that Assumption \ref{ass1:smoothness} and \ref{ass3:dissimilarity} holds. Then the local gradient $\left\|\nabla f_i(\rvy_{i,k,2}^t)\right\|^2$ in $k$-th iteration and $t$-th round is bounded by:
\begin{equation}
\small
\begin{aligned}
\mathbb{E}\left\|\nabla f_i(\rvy_{i,k,2}^t)\right\|^2
\leq & 
 4 L^{2} \mathbb{E}\left\|\rvx_{i, k-1}^{t}-\rvx_{t}\right\|^{2} +\sum_{j\in I}{Q_j}\mathbb{E}\left\|\delta_{t-j}\right\|^{2}\\
 &\quad +4G^2
 + 4 B^2\mathbb{E}\Vert\nabla f(\rvu_{t})\Vert^{2},
 \end{aligned}
\end{equation}
where $Q_j = 4\rho L^2\beta_j+\frac{4 L^{2}K^{2}J(1-A)^{2}}{A^2}$.
\end{lemma}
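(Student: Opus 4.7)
The plan is to decompose $\nabla f_i(\rvy_{i,k,2}^t)$ via three telescoping insertions at the anchor points $\rvx_{i,k-1}^t$, $\rvx_t$, and $\rvu_t$, then bound each resulting piece separately. Specifically, I would write
$$\nabla f_i(\rvy_{i,k,2}^t) = \bigl[\nabla f_i(\rvy_{i,k,2}^t) - \nabla f_i(\rvx_{i,k-1}^t)\bigr] + \bigl[\nabla f_i(\rvx_{i,k-1}^t) - \nabla f_i(\rvx_t)\bigr] + \bigl[\nabla f_i(\rvx_t) - \nabla f_i(\rvu_t)\bigr] + \nabla f_i(\rvu_t),$$
square both sides, and apply Jensen's inequality (Lemma~\ref{lem1} with $n=4$) to pay a factor of $4$ per piece. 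For the first three differences I would invoke $L$-smoothness (Assumption~\ref{ass1:smoothness}) to convert them into $L^2$ times squared point distances; the middle one is already the first summand on the right-hand side of the claim.

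For the $\rvy_{i,k,2}^t - \rvx_{i,k-1}^t$ perturbation, I would use $\rvy_{i,k,2}^t = \rvx_{i,k-1}^t - \sum_{j\in I}\beta_j \delta_{t-j}$ so that the difference is exactly $-\sum_j\beta_j\delta_{t-j}$; the weighted Jensen inequality (implied by Lemma~\ref{lem3} after rescaling by $\rho = \sum_j \beta_j$) then yields the bound $\rho\sum_j\beta_j\|\delta_{t-j}\|^2$, contributing the $4\rho L^2\beta_j$ part of $Q_j$. For the $\rvx_t - \rvu_t$ term, substituting $\rvx_{t-j}-\rvx_{t-j-1} = -K\delta_{t-j}$ into the definition of $\rvu_t$ gives $\rvx_t - \rvu_t = \frac{K}{A}\sum_{j\in I}c_j\delta_{t-j}$ where $c_j = \sum_{s=j}^{J-1}\alpha_s \leq 1-A$; Lemma~\ref{lem1} across the $J$ summands followed by $c_j^2 \leq (1-A)^2$ gives $\|\rvx_t-\rvu_t\|^2 \leq \frac{K^2J(1-A)^2}{A^2}\sum_j\|\delta_{t-j}\|^2$, which matches the second part of $Q_j$. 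For the fourth term $\|\nabla f_i(\rvu_t)\|^2$, taking expectation over the random client index $i$ and invoking the $(G,B)$-dissimilarity (Assumption~\ref{ass3:dissimilarity}) at $\rvu_t$ produces $G^2 + B^2\|\nabla f(\rvu_t)\|^2$. Multiplying all four pieces by $4$ and summing assembles the stated bound.

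The main obstacle will be the third decomposition term: I need to collapse the nested sum $\sum_{j\in I}\sum_{s=j}^{J-1}\alpha_s$ into clean per-$j$ coefficients $c_j$, substitute $\rvx_{t-j}-\rvx_{t-j-1}=-K\delta_{t-j}$ carefully, and then apply Jensen so that the constant $\frac{K^2 J(1-A)^2}{A^2}$ falls out exactly as claimed rather than a slightly larger or smaller one. A secondary subtlety is the index convention: the statement references $\rvx_{i,k-1}^t$ while the algorithm indexes iterates from $k=0$, so I would interpret $\rvx_{i,k-1}^t$ as the iterate on which $\rvy_{i,k,2}^t$ is formed, ensuring that the first difference is just the deterministic offset $-\sum_j\beta_j\delta_{t-j}$ with no SGD noise contaminating that piece. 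Once these are handled, the remaining steps are routine applications of the stated lemmas and assumptions.
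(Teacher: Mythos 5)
Your proposal is correct and follows essentially the same route as the paper's proof: the identical four-term telescoping decomposition through $\rvx_{i,k-1}^{t}$, $\rvx_{t}$, and $\rvu_{t}$, Jensen with a factor of $4$, $L$-smoothness on the three differences, weighted Jensen giving $\rho\sum_{j}\beta_{j}\|\delta_{t-j}\|^{2}$ for the $\rvy$-offset, the bound $\bigl(\sum_{s=j}^{J-1}\alpha_{s}\bigr)^{2}\leq(1-A)^{2}$ for the $\rvx_{t}-\rvu_{t}$ term, and the $(G,B)$-dissimilarity for $\|\nabla f_{i}(\rvu_{t})\|^{2}$. The index subtlety you flag (the algorithm forms $\rvy^{t}_{i,k,2}$ from $\rvx^{t}_{i,k}$ while the lemma references $\rvx^{t}_{i,k-1}$) is present in the paper's own derivation as well, and your resolution—treating the difference as the same deterministic offset $-\sum_{j}\beta_{j}\delta_{t-j}$ regardless of the local index—is exactly what makes both versions go through.
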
 

\begin{proof}
\begin{equation}
\small
\begin{aligned}
&\mathbb{E}\left\|\nabla f_i(\rvy_{i,k,2}^t)\right\|^2\\
\overset{(a)}{\leq}& 4\mathbb{E}\left\|\nabla f_{i}\left(\rvy_{i, k-1,2}^{t}\right)-\nabla f_{i}\left(\rvx_{i, k-1}^{t}\right)\right\|^{2}\\
&\quad+4\mathbb{E}\left\|\nabla f_{i}\left(\rvx_{i, k-1}^{t}\right)-\nabla f_{i}\left(\rvx_{t}\right)\right\|^{2}\\
&\quad +4\mathbb{E}\left\|\nabla f_{i}\left(\rvx_{t}\right)-\nabla f_{i}\left(\rvu_{t}\right)\right\|^{2} + 4\mathbb{E}\left\|\nabla f_{i}\left(\rvu_{t}\right)\right\|^{2}\\
\overset{(b)}{\leq}& 4 L^{2} \mathbb{E}\left\|\rvx_{i, k-1}^{t}-\rvx_{t}\right\|^{2}+4G^2 + 4 B^2\mathbb{E}\Vert\nabla f(\rvu_{t})\Vert^{2}\\
&\quad + \sum_{j\in I}\left(4\rho L^2\beta_j+\frac{4 L^{2}K^{2}J}{A^2}\left(\sum_{s=j}^{J-1}\alpha_{s}\right)^2\right)\mathbb{E}\left\|\delta_{t-j}\right\|^{2}.\\
\leq & 4 L^{2} \mathbb{E}\left\|\rvx_{i, k-1}^{t}-\rvx_{t}\right\|^{2} + 4G^2 + 4 B^2\mathbb{E}\Vert\nabla f(\rvu_{t})\Vert^{2}\\
&\quad \sum_{j\in I}\underbrace{\left(4\rho L^2\beta_j+\frac{4 L^{2}K^{2}J(1-A)^{2}}{A^2}\right)}_{Q_j}\mathbb{E}\left\|\delta_{t-j}\right\|^{2}.\\
\end{aligned}
\end{equation}
where $(a)$ is from Lemma \ref{lem1} and $(b)$ is from Assumption \ref{ass1:smoothness}, Assumption \ref{ass3:dissimilarity}, the definition of $\rvu_{t}$, and Lemma \ref{lem1}.
\end{proof}

\begin{lemma}\label{lem:bounded client drift}
Let that Assumption \ref{ass1:smoothness}, \ref{ass2:bounded_variance} and \ref{ass3:dissimilarity} hold. Then, the client drift is bounded satisfying $\eta_l\leq\frac{1}{4LK\sqrt{A}}$, :
\begin{equation}
\small
\begin{aligned}
\varepsilon_{t} 
=&\frac{1}{K N} \sum_{i \in[N]}\sum_{k \in[K]} \mathbb{E}\left\|\rvx_{i, k}^{t}-\rvx_{t}\right\|^{2}\\
\leq & 3KA^{2}\eta_{l}^{2} \sigma_{l}^{2}+\sum_{j\in I}6K^{2}M_{j} \mathbb{E}\left\|\delta_{t-j}\right\|^{2} + 24AK^{2}\eta_{l}^{2} G^2\\
&\quad + 24 AK^{2}\eta_{l}^{2} B^2\mathbb{E}\Vert\nabla f(\rvu_{t})\Vert^{2}.
\end{aligned}
\end{equation}
where $P_j = \rho\beta_{j}+\frac{K^{2}J(1-A)^{2}}{A^{2}}$ and $M_{j}=\alpha_{j}+4AL^{2}\eta_{l}^{2}P_{j}$.
\end{lemma}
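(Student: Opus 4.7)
The plan is to derive a one-step recursion for $e_k^{(i)} := \mathbb{E}\|\rvx_{i,k}^t - \rvx_t\|^2$ and unroll it over the $K$ local iterations before averaging across $(i,k)$. From the local update
$$\rvx_{i,k+1}^t - \rvx_t = (\rvx_{i,k}^t - \rvx_t) - \sum_{j\in I}\alpha_j\,\delta_{t-j} - A\eta_l\,\rvg_{i,k}^t,$$
I would first take conditional expectation over $\xi_{i,k}^t$ and use Assumption~\ref{ass2:bounded_variance} to peel off the martingale noise, producing the additive term $A^2\eta_l^2\sigma_l^2$ plus a deterministic squared norm in which $\rvg_{i,k}^t$ is replaced by $\nabla f_i(\rvy_{i,k,2}^t)$.

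Next I would apply Lemma~\ref{lem2} with $a = 1/(K-1)$ to separate $e_k^{(i)}$ from the update, incurring a factor $1+1/(K-1)$ on $e_k^{(i)}$ and a factor $K$ on $\|A\eta_l\nabla f_i(\rvy_{i,k,2}^t)+\sum_{j\in I}\alpha_j\delta_{t-j}\|^2$. The structural identity $A+\sum_{j\in I}\alpha_j = 1$ lets Lemma~\ref{lem3} bound this squared norm by $A\eta_l^2\|\nabla f_i(\rvy_{i,k,2}^t)\|^2 + \sum_{j\in I}\alpha_j\|\delta_{t-j}\|^2$. Plugging the gradient bound from Lemma~\ref{lem:bounded gradient} in and using that $A\eta_l^2 Q_j + \alpha_j = M_j$ (since $Q_j = 4L^2 P_j$) collapses everything into a recursion of the form
$$e_{k+1}^{(i)} \le (1+c)\,e_k^{(i)} + \psi_t,$$
where $c := \tfrac{1}{K-1} + 4KAL^2\eta_l^2$ and $\psi_t$ aggregates $K\sum_{j\in I}M_j\,\mathbb{E}\|\delta_{t-j}\|^2$, $4KA\eta_l^2 G^2$, $4KAB^2\eta_l^2\,\mathbb{E}\|\nabla f(\rvu_t)\|^2$, and $A^2\eta_l^2\sigma_l^2$.

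The step-size condition $\eta_l \le 1/(4LK\sqrt{A})$ forces $4KAL^2\eta_l^2 \le 1/(4K)$, hence $c \le 2/(K-1)$ and the geometric factor $(1+c)^K$ is bounded by a small universal constant. Unrolling with $e_0^{(i)} = 0$ gives $e_k^{(i)} \le \psi_t \cdot \tfrac{(1+c)^k-1}{c}$, and averaging over $k \in \{0,\dots,K-1\}$ and $i \in [N]$ contributes a multiplicative factor of order $K$ in front of $\psi_t$. Distributing this through the four components of $\psi_t$ -- and folding the factor of $4$ from Lemma~\ref{lem:bounded gradient} into the $G^2$ and $B^2$ pieces -- yields the advertised coefficients $3K A^2\eta_l^2\sigma_l^2$, $6K^2 M_j$, $24AK^2\eta_l^2 G^2$, and $24AK^2\eta_l^2 B^2$.

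The main obstacle will be the numerical bookkeeping, in particular reconciling the slight index mismatch between $e_k^{(i)}$ on the left and the $\|\rvx_{i,k-1}^t-\rvx_t\|^2$ appearing inside Lemma~\ref{lem:bounded gradient}; I would handle this either by shifting the averaging window by one local step or by invoking the recursion itself to absorb the boundary term. A related subtlety is matching the stated constants $3$ and $6$: this requires the specific choice $a = 1/(K-1)$ in Lemma~\ref{lem2} together with the step-size budget so that both $(1+c)^K$ and the geometric sum $\tfrac{(1+c)^K-1}{c}$ stay within the constants implicitly baked into the stated bound.
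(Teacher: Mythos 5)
Your proposal follows essentially the same route as the paper's proof: a one-step recursion obtained from the update rule, variance separation via Assumption~\ref{ass2:bounded_variance}, Lemma~\ref{lem2} to split off $e_k^{(i)}$, Lemma~\ref{lem3} exploiting $A+\sum_{j\in I}\alpha_j=1$, the gradient bound of Lemma~\ref{lem:bounded gradient} folded in via $M_j=\alpha_j+A\eta_l^2Q_j$, and an unrolled geometric sum controlled by the step-size condition. The only differences are cosmetic bookkeeping choices (the paper uses the step-dependent $a=\tfrac{1}{2k-1}$ in Lemma~\ref{lem2} and the bound $(k-1)[(1+\tfrac{1}{k-1})^k-1]\le 3k$, where you use a uniform $a=\tfrac{1}{K-1}$ and a direct geometric sum), and both yield the stated constants.
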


\begin{proof}
Recall that the local update on client $i$ is $\rvx_{i,k+1}^{t}=\rvy_{i,k,1}^{t}-(1-\sum_{j \in I}\alpha_{j})\eta_{l}\rvg_{i,k}^{t}$. Then,
\begin{equation*}
\small
\begin{aligned}
&\mathbb{E}\left\|\rvx_{i, k}^{t}-\rvx_{t}\right\|^{2}\\ =&\mathbb{E}\left\|\rvx_{i, k-1}^{t}-\eta_{l}A \rvg_{i,k-1}^t -\sum_{j\in I}\alpha_{j}\delta_{t-j}-\rvx_{t}\right\|^{2} \\
\overset{(a)}{\leq}& \mathbb{E}\left\|\rvx_{i, k-1}^{t}-\eta_{l}A \nabla f_{i}\left (\rvy_{i,k-1,2}^{t}\right) -\sum_{j\in I}\alpha_{j}\delta_{t-j}-\rvx_{t}\right\|^{2}+A^{2} \eta_{l}^{2} \sigma_{l}^{2} \\
\overset{(b)}{\leq}& \mathbb{E}\left[\left(1+\frac{1}{2k-1}\right)\left\|\rvx_{i, k-1}^{t}-\rvx_{t}\right\|^{2}\right]\\
&\quad+\mathbb{E}\left[2k\left\|\eta_{l}A \nabla f_{i}\left(\rvy_{i,k-1,2}^{t}\right)+\sum_{j\in I}\alpha_{j} \delta_{t-j}\right\|^{2}\right] +A^{2} \eta_{l}^{2} \sigma_{l}^{2}\\
\end{aligned}
\end{equation*}
\begin{equation}
\small
\begin{aligned}
\overset{(c)}{\leq}&\left(1+\frac{1}{k-1}\right) \mathbb{E}\left\|\rvx_{i, k-1}^{t}-\rvx_{t}\right\|^{2} \\
&\quad +2k\sum_{j\in I}\underbrace{\left(\alpha_{j}+4A L^{2}\eta_{l}^{2}\underbrace{\left(\rho\beta_{j}+\frac{K^{2}J(1-A)^{2}}{A^{2}}\right)}_{P_j}\right)}_{M_{j}} \mathbb{E}\left\|\delta_{t-j}\right\|^{2}\\
&\quad + 8Ak\eta_{l}^{2} G^2 +8 Ak\eta_{l}^{2} B^2\mathbb{E}\Vert\nabla f(\rvu_{t})\Vert^{2} +A^{2} \eta_{l}^{2} \sigma_{l}^{2}
\end{aligned}
\end{equation} 
where $(a)$ is from Assumption \ref{ass2:bounded_variance}; $(b)$ is from Lemma \ref{lem2} with $a=\frac{1}{2k-1}$ and  $\eta_{l} \leq \frac{1}{4 L K\sqrt{A}}$; $(c)$ is from Lemma \ref{lem3} and Lemma \ref {lem:bounded gradient}.

Unrolling the recursion above,
\begin{equation}
\small
\begin{aligned}
&\mathbb{E}\left\|\rvx_{i, k}^{t}-\rvx_{t}\right\|^{2}\\
\overset{(a)}{\leq}&
3A^{2}k \eta_{l}^{2} \sigma_{l}^{2}+\sum_{j\in I}6k^2M_{j} \mathbb{E}\left\|\delta_{t-j}\right\|^{2}+ 24Ak^2\eta_{l}^{2} G^2\\
&\quad +24Ak^2\eta_{l}^{2} B^2\mathbb{E}\Vert\nabla f(\rvu_{t})\Vert^{2}
\end{aligned}   
\end{equation}
where $(a)$ is due to $(k-1)\left[\left(\frac{1}{k-1}+1\right)^{k}-1\right] \leq 3 k$.\\ 
Finally, we have
\begin{equation}
\small
\begin{aligned}
\varepsilon_{t}&=\frac{1}{KN} \sum_{i \in[N]}\sum_{k \in[K]} \mathbb{E}\left\|\rvx_{i, k}^{t}-\rvx_{t}\right\|^{2}\\
&\leq 3A^{2}K \eta_{l}^{2} \sigma_{l}^{2}+\sum_{j\in I}6K^2M_{j} \mathbb{E}\left\|\delta_{t-j}\right\|^{2}+ 24AK^2\eta_{l}^{2} G^2\\
&\quad+24AK^2\eta_{l}^{2} B^2\mathbb{E}\Vert\nabla f(\rvu_{t})\Vert^{2}.
\end{aligned}   
\end{equation}
This completes the proof.
\end{proof}

\begin{lemma}\label{lem: bounded client gradient}
Suppose that Assumption \ref{ass1:smoothness}, \ref{ass2:bounded_variance} and \ref{ass3:dissimilarity} hold. For partial client participation, we could bound $\mathbb{E}\left\|\frac{1}{SK}\sum_{i \in \mathcal{S}_t} \sum_{k \in[K]} \rvg_{i, k}^{t}\right\|^{2}$ as follows:
\begin{equation}
\small
\begin{aligned}
   &\mathbb{E}\left\|\frac{1}{SK}\sum_{i \in \mathcal{S}_t} \sum_{k \in[K]} \rvg_{i, k}^{t}\right\|^{2}\\
   \leq & 16L^{2}\sum_{j\in I}\left(6K^{2}M_{j}+ P_{j}\right) \mathbb{E}\left\|\delta_{t-j}\right\|^{2} + \frac{8G^{2}}{S} + \frac{2 \sigma_{l}^{2}}{K S}\\ &\quad + 48KL^{2}A^{2}\eta_{l}^{2} \sigma_{l}^{2} + 384AL^{2}K^{2}\eta_{l}^{2} G^2 
   + 64B^2\mathbb{E}\Vert\nabla f(\rvu_{t})\Vert^{2}.
\end{aligned}
\end{equation}
where 
$
    P_j = \rho\beta_{j}+\frac{K^{2}J(1-A)^{2}}{A^{2}}
$ and
$
    M_{j}=\alpha_{j}+4AL^{2}\eta_{l}^{2}P_{j}
$.
\end{lemma}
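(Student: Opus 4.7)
The strategy is to control $\mathbb{E}\|\frac{1}{SK}\sum_{i\in\mathcal{S}_t}\sum_{k\in[K]}\rvg_{i,k}^t\|^2$ in four stages: peel off the stochastic gradient noise, then the partial-participation variance, then bound the local gradient norms via Lemma~\ref{lem:bounded gradient}, and finally close the recursion with the client-drift bound in Lemma~\ref{lem:bounded client drift}.

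First I would apply Lemma~\ref{lem4} to strip the stochastic noise. Since Assumption~\ref{ass2:bounded_variance} guarantees that $\{\rvg_{i,k}^t - \nabla f_i(\rvy_{i,k,2}^t)\}$ forms a martingale difference sequence with variance bounded by $\sigma_l^2$, applying Lemma~\ref{lem4} to the $SK$ independent noise terms and pulling out the $1/(SK)^2$ factor gives
\begin{equation}
\small
\mathbb{E}\left\|\frac{1}{SK}\sum_{i\in\mathcal{S}_t,k}\rvg_{i,k}^t\right\|^{2} \leq 2\,\mathbb{E}\left\|\frac{1}{SK}\sum_{i\in\mathcal{S}_t,k}\nabla f_i(\rvy_{i,k,2}^t)\right\|^{2} + \frac{2\sigma_l^{2}}{SK},
\end{equation}
which directly delivers the $\frac{2\sigma_l^2}{SK}$ term in the stated bound.

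Second I would handle the partial participation. Writing $\bar{V}_i = \frac{1}{K}\sum_k\nabla f_i(\rvy_{i,k,2}^t)$ and using the mean/variance identity $\mathbb{E}\|\frac{1}{S}\sum_{i\in\mathcal{S}_t}\bar{V}_i\|^2 = \|\frac{1}{N}\sum_i\bar{V}_i\|^2 + \frac{1}{S}\mathrm{Var}$ for uniform sampling, with $\mathrm{Var}\le\frac{1}{N}\sum_i\|\bar{V}_i\|^2$, the deterministic average decouples into a mean piece and a $\frac{1}{S}$-suppressed variance piece. The variance piece majorizes to $\frac{1}{SNK}\sum_{i,k}\mathbb{E}\|\nabla f_i(\rvy_{i,k,2}^t)\|^2$ via Lemma~\ref{lem1}, which is exactly the carrier of the $G^2/S$ contribution. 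The mean piece $\|\frac{1}{NK}\sum_{i,k}\nabla f_i(\rvy_{i,k,2}^t)\|^2$ is instead recentered around $\nabla f(\rvu_t)$: adding and subtracting $\nabla f_i(\rvu_t)$ and invoking Lemma~\ref{lem2} together with $L$-smoothness yields a bound of the form $2\|\nabla f(\rvu_t)\|^2 + \frac{2L^2}{NK}\sum_{i,k}\|\rvy_{i,k,2}^t - \rvu_t\|^2$, and the deviation $\|\rvy_{i,k,2}^t - \rvu_t\|^2$ further expands via Lemma~\ref{lem1} into $\|\rvx_{i,k}^t-\rvx_t\|^2$ and $\|\delta_{t-j}\|^2$ terms using the definitions of $\rvu_t$ and $\rvy_{i,k,2}^t$.

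Third I would feed both pieces into Lemma~\ref{lem:bounded gradient} to bound each $\|\nabla f_i(\rvy_{i,k,2}^t)\|^2$ by $4L^2\|\rvx_{i,k-1}^t-\rvx_t\|^2$, $\sum_j Q_j\|\delta_{t-j}\|^2$ (noting $Q_j = 4L^2 P_j$), $4G^2$, and $4B^2\|\nabla f(\rvu_t)\|^2$. Averaging over $k$ and $i\in[N]$ converts the drift pieces into $\varepsilon_t$, and applying Lemma~\ref{lem:bounded client drift} injects $3KA^2\eta_l^2\sigma_l^2$, $\sum_j 6K^2 M_j\|\delta_{t-j}\|^2$, $24AK^2\eta_l^2 G^2$, and $24AK^2\eta_l^2 B^2\|\nabla f(\rvu_t)\|^2$, each picked up with a prefactor of $4L^2$. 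Combining with the outer factor of $4$ produced by the two nested Young-type decompositions (factor $2$ from Step~1 and factor $2$ from Step~2) reproduces the declared coefficients $16L^2(6K^2M_j + P_j)$, $\frac{8G^2}{S}$, $48KL^2A^2\eta_l^2\sigma_l^2$, $384AL^2K^2\eta_l^2 G^2$, and $64B^2\|\nabla f(\rvu_t)\|^2$.

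The hard part will be the constant bookkeeping. The coefficient on $\|\nabla f(\rvu_t)\|^2$ must be assembled from two distinct pathways — the direct $4B^2\|\nabla f(\rvu_t)\|^2$ inside Lemma~\ref{lem:bounded gradient} and the $B^2$ term propagated through $\varepsilon_t$ by Lemma~\ref{lem:bounded client drift} — while the $G^2$ contribution must be routed so that its $1/S$ prefactor comes from the partial-participation variance and \emph{not} from the deterministic mean. Getting this split exactly right, so that the heterogeneity noise $G^2$ and the stochastic variance $\sigma_l^2$ inherit the $1/S$ speedup but the deterministic $\|\nabla f(\rvu_t)\|^2$ does not, is the delicate technical point that ultimately underpins the linear-speedup rate claimed in Theorem~\ref{thm1}.
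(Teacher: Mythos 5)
Your proposal is correct and follows essentially the same route as the paper's proof: Lemma~\ref{lem4} to strip the stochastic noise (yielding the $\frac{2\sigma_l^2}{KS}$ term), the sampling-without-replacement mean/variance split in which only the variance piece carries the $\frac{1}{S}$-suppressed $G^2$, Lemma~\ref{lem:bounded gradient} for the local gradient norms, and Lemma~\ref{lem:bounded client drift} to absorb the drift, with all the stated coefficients reproduced. Your closing observation about routing the $G^2$ contribution exclusively through the participation-variance piece is exactly the point that makes the paper's $\frac{8G^2}{S}$ (rather than $16G^2$) come out.
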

\begin{proof}
According to Lemma \ref{lem4}, we first separate mean and variance of the gradient
\begin{equation}
\small
\begin{aligned}
\quad\mathbb{E}\left\|\frac{1}{SK}\sum_{i \in \mathcal{S}} \sum_{k \in[K]} \rvg_{i, k}^{t}\right\|^{2}& \leq  \frac{2}{K} \sum_{k \in[K]}\mathbb{E}\left\|\frac{1}{S} \sum_{i \in \mathcal{S}} \nabla f_{i}\left(\rvy_{i, k,2}^{t}\right)\right\|^{2}\\
&\quad+\frac{2 \sigma_{l}^{2}}{K S}
\end{aligned}
\end{equation}
Then, we further bound the first term as 
\begin{equation}
\small
\begin{aligned}
&\mathbb{E}\left\|\frac{1}{S} \sum_{i \in \mathcal{S}_t} \nabla f_{i}\left(\rvy_{i,k,2}^{t}\right)\right\|^{2}\\
\overset{(a)}{=}&\mathbb{E}\left\|\frac{1}{S} \sum_{i \in[N]} \nabla f_{i}\left(\rvy_{i,k,2}^{t}\right) \mathbb{I}_{i}\right\|^{2}\\
\overset{}{=}& \frac{1}{S^{2}}\mathbb{E}\Biggl[\sum_{i, j \in[N], i \neq j}\left\langle\nabla f_{i}\left(\rvy_{i,k,2}^{t}\right), \nabla f_{j}\left(\rvy_{j,k,2}^{t}\right)\right\rangle \mathbb{E}\left[\mathbb{I}_{i} \mathbb{I}_{j}\right]\\
&\quad+\sum_{i \in[N]}\left\langle\nabla f_{i}\left(\rvy_{i,k,2}^{t}\right), \nabla f_{i}\left(\rvy_{i,k,2}^{t}\right)\right\rangle \mathbb{E}\left[\mathbb{I}_{i}\right]\Biggr]\\
\overset{(b)}{=}& \frac{1}{S^{2}}\mathbb{E}\Biggl[\sum_{i, j \in[N], i \neq j} \frac{S(S-1)}{N(N-1)}\left\langle\nabla f_{i}\left(\rvy_{i,k,2}^{t}\right), \nabla f_{j}\left(\rvy_{j,k,2}^{t}\right)\right\rangle\\
&\quad+\sum_{i \in[N]} \frac{S}{N}\left\langle\nabla f_{i}\left(\rvy_{i,k,2}^{t}\right), \nabla f_{i}\left(\rvy_{i,k,2}^{t}\right)\right\rangle\Biggr]\\
\overset{(c)}{<}&\mathbb{E} \left\|\frac{1}{N}\sum_{i\in[N]} \nabla f_{i}\left(\rvy_{i,k,2}^{t}\right)\right\|^2+\frac{1}{SN}\sum_{i \in[N]} \mathbb{E}\left\|\nabla f_{i}\left(\rvy_{i,k,2}^{t}\right)\right\|^2.\\
\end{aligned}
\end{equation}
where we define $\mathbb{I}_{i}$ as the random variable in $(a)$ and $i\in\mathcal{S}_t$;  $(b)$ is due to $\mathbb{E}\left[\mathbb{I}_{i}^{t}\right]=P(i\in\mathcal{S}_t)=\frac{S}{N}$ and $\mathbb{E}\left[\mathbb{I}_{i}^{t} \mathbb{I}_{j}^{t}\right]=P(i,j\in\mathcal{S}_t)=\frac{S(S-1)}{N(N-1)}$ since we adopt the property of sampling without replacement; $(c)$ is for $\frac{S-1}{N-1}<1$ and $\frac{N-S}{N-1}\le1$ since $1\leq S<N$. Therefore,
\begin{equation}
\small
\begin{aligned}
&\mathbb{E}\left\|\frac{1}{S} \sum_{i \in \mathcal{S}_t} \nabla f_{i}\left(\rvy_{i,k,2}^{t}\right)\right\|^{2}\\
\overset{(a)}{\leq}& 8L^{2}\mathbb{E}\left\|\rvx_{i, k}^{t}-\rvx_{t}\right\|^{2}+8L^{2}\sum_{j\in I}\underbrace{\left( \rho\beta_{j}+\frac{K^{2}J(1-A)^{2}}{A^{2}}\right)}_{P_{j}}\mathbb{E}\left\|\delta_{t-j}\right\|^{2}\\
&\quad+ \frac{4G^{2}}{S}+ 4(B^{2}+1)\mathbb{E}\|\nabla f(\rvu_{t})\|^{2}.
\end{aligned}
\end{equation}
where $(a)$ is from Lemma \ref{lem:bounded gradient}.

Finally we have,
\begin{equation}
\small
\begin{aligned}
&\mathbb{E}\left\|\frac{1}{SK}\sum_{i \in \mathcal{S}} \sum_{k \in[K]} \rvg_{i, k}^{t}\right\|^{2} \\
\overset{(a)}{\leq}& 16L^{2}\sum_{j\in I}\left(6K^{2}M_{j}+ P_{j}\right) \mathbb{E}\left\|\delta_{t-j}\right\|^{2} + \frac{8G^{2}}{S} + \frac{2 \sigma_{l}^{2}}{K S}\\
&\quad+ 48KL^{2}A^{2}\eta_{l}^{2} \sigma_{l}^{2} + 384AL^{2}K^{2}\eta_{l}^{2} G^2 + 64B^2\mathbb{E}\Vert\nabla f(\rvu_{t})\Vert^{2}.\\
\end{aligned}
\end{equation}
where $(a)$ is from Lemma \ref{lem:bounded client drift}. This completes the proof.
\end{proof}

\subsection{Proof of Theorem \ref{thm1}}

By expanding the $L$-smooth inequality of $f_{i}$, we have:
\begin{equation*}
\small
\begin{aligned}
& \mathbb{E} \left[f\left(\rvu_{t+1}\right)-f\left(\rvu_{t}\right)\right] \\
\leq &\mathbb{E}\left\langle\nabla f\left(\rvu_{t}\right), \rvu_{t+1}-\rvu_{t}\right\rangle+\frac{L}{2} \mathbb{E}\left\|\rvu_{t+1}-\rvu_{t}\right\|^{2} \\
\overset{(a)}{=} &-\eta \mathbb{E}\left\|\nabla f\left(\rvu_{t}\right)\right\|^{2}+\frac{L}{2} \mathbb{E}\left\|\rvu_{t+1}-\rvu_{t}\right\|^{2}\\
&\quad-\frac{\eta}{K N} \mathbb{E}\left\langle\nabla f\left(\rvu_{t}\right),  \sum_{i \in[N]}\sum_{k \in[K]}\left(\nabla f_{i}\left(\rvy_{i, k,2}^{t}\right)-\nabla f_{i}\left(\rvu_{t}\right)\right)\right\rangle \end{aligned}
\end{equation*}
\begin{equation}
\begin{aligned}
\overset{(b)}{\leq}&-\frac{\eta}{2} \mathbb{E}\left\|\nabla f\left(\rvu_{t}\right)\right\|^{2}+\frac{\eta^{2}L}{2} \mathbb{E}\left\|\frac{1}{SK}\sum_{i \in \mathcal{S}} \sum_{i \in [K]} \rvg_{i, k}^{t}\right\|^{2} + \frac{3\eta L^{2}}{2}\varepsilon_{t}\\
&\quad +\frac{3\eta L^{2}}{2}\sum_{j\in I}P_{j}\mathbb{E}\left\|\delta_{t-j}\right\|^{2}\\
\overset{(c)}{\leq}& -\eta\left(\frac{1}{2}-32\eta LB^{2}-36\eta_{l}^{2}AK^{2}B^{2}L^{2}\right) \mathbb{E}\left\|\nabla f\left(\rvu_{t}\right)\right\|^{2}\\
&\quad + \frac{4\eta^{2}LG^{2}}{S}
+\left(8\eta^{2} L^{3}+\frac{3\eta L^{2}}{2}\right)\sum_{j\in I}\left(6K^{2}M_{j}+P_{j}\right) \eta_{l}^2 C\\
&\quad + \left(24\eta^{2}L+\frac{9\eta}{2}\right)\eta_{l}^{2}KA^{2}L^{2}\sigma_{l}^{2} +\frac{\eta^{2}L\sigma_{l}^{2}}{KS}\\
&\quad+\left(192\eta^{2}L+36\eta\right)\eta_{l}^{2}AK^{2}L^{2}G^{2}\\
\overset{(d)}{\leq}& -\eta\lambda \mathbb{E}\left\|\nabla f\left(\rvu_{t}\right)\right\|^{2}+\frac{\eta^{2}L\sigma_{l}^{2}}{KS} + \frac{4\eta^{2}LG^{2}}{S}\\ &\quad+3\eta\eta_l^2 L^{2}\sum_{j\in I}\left(6K^{2}M_{j}+P_{j}\right)C +9\eta\eta_{l}^{2}KA^{2}L^{2}\sigma_{l}^{2}\\ &\quad +72\eta\eta_{l}^{2}AK^{2}L^{2}G^{2}.
\end{aligned}
\end{equation}
where $(a)$ is from Lemma \ref{lem:u update}; $(b)$ is due to Cauchy-Swartz inequality, Lemma \ref{lem1}, \ref{lem:u update} and \ref{lem:bounded gradient} and Assumption \ref{ass1:smoothness}; $(c)$ is from Lemma \ref{lem: bounded client gradient} and Lemma \ref{lem:bounded client drift}; $(d)$ holds since there exists a contant $\lambda>0$ such that $\frac{1}{2}-32\eta LB^{2}-36\eta_{l}^{2}AK^{2}B^{2}L^{2}>\lambda>0$ if $64\eta LB^{2}+72\eta_{l}^{2}AK^{2}B^{2}L^{2}<1$ and set $\eta\le\frac{3}{16L}$, which implies $8\eta^2 L^3 \le \frac{3\eta L^2}{2}$, $24\eta^2 L\leq\frac{9\eta}{2}$ and $192\eta^2 L\leq 36\eta$.\\
Here we provide the constant upper bound of $\sum_{j\in I}\left(6K^{2}M_{j}+P_{j}\right)$:
\begin{equation}
\small
\begin{aligned}
    &\sum_{j\in I}\left(6K^{2}M_{j}+P_{j}\right)\\
    =& 6(1-A)K^{2}+(24K^{2}AL^{2}\eta_{l}^{2}+1)\sum_{j\in I}P_{j}\\
    \leq &6(1-A)K^{2}+(\frac{27A}{32}+1)(\rho^{2}+\frac{K^{2}J^{2}(1-A)^{2}}{A^{2}})=V.\\
\end{aligned}
\end{equation}

Rearranging and sampling from $t\in[T]$, we have the convergence for partial client participation as follows:
\begin{equation}
 \small
\begin{aligned}
\min_{t\in[T]}\mathbb{E}\|\nabla f(u_{t})\|^{2}&\leq \frac{f_0-f_*}{\lambda\eta_{l} KT}+ \Psi_1,
\end{aligned}
\end{equation}
where 
\begin{equation}
\small
\begin{aligned}
\Psi_1 &= \frac{1}{\lambda}\Biggl(\frac{\eta_l L}{S}\sigma_{l}^{2} + \frac{4\eta_l KL}{S}G^{2}+{9}\eta_{l}^{2}KA^{2}L^{2}\sigma_{l}^{2}+72\eta_{l}^{2}AK^{2}L^{2}G^{2}\\
&\quad+3 \eta_l^2 L^{2}V C\Biggr)
\end{aligned}
\end{equation}

\subsection{Proof of Theorem \ref{thm2}}
Similar to the general nonconvex case, we have
\begin{equation}
\small
\begin{aligned}
&\mathbb{E} \left[f\left(\rvu_{t+1}\right)-f\left(\rvu_{t}\right)\right] \\
\overset{(a)}{\leq}& -\frac{\eta\lambda}{2}
 \mathbb{E}\left\|\nabla f\left(\rvu_{t}\right)\right\|^{2}-\mu\eta\lambda\mathbb{E}\left[f\left(\rvu_{t}\right)-f_{*}\right]+\frac{\eta^{2}L\sigma_{l}^{2}}{KS} + \frac{4\eta^{2}LG^{2}}{S}\\
 &\quad+3\eta\eta_l^2 L^{2}\sum_{j\in I}\left(6K^{2}M_{j}+P_{j}\right) C +9\eta\eta_{l}^{2}KA^{2}L^{2}\sigma_{l}^{2}\\
 &\quad+72\eta\eta_{l}^{2}AK^{2}L^{2}G^{2}
\end{aligned}
\end{equation}
where $(a)$ is from Assumption \ref{ass4:PL inequality}.
\begin{equation}
\small
\begin{aligned}
&\mathbb{E}\left\|\nabla f\left(\rvu_{t}\right)\right\|^{2}\\
\leq &(1-\mu\eta\lambda)\frac{2(f(\rvu_t)-f_*)}{\eta\lambda}-\frac{2(f(\rvu_{t+1})-f_*)}{\eta\lambda}+\frac{2\eta L\sigma_{l}^{2}}{\lambda KS} \\ 
\quad& + \frac{8\eta LG^{2}}{\lambda S}+\frac{6\eta_l^2 L^{2}}{\lambda}\sum_{j\in I}\left(6K^{2}M_{j}+P_{j}\right) C
+\frac{18\eta_{l}^{2}KA^{2}L^{2}\sigma_{l}^{2}}{\lambda}\\ 
\quad& +\frac{144\eta_{l}^{2}AK^{2}L^{2}G^{2}}{\lambda}
\end{aligned}
\end{equation}
Then, we have
\begin{equation}
\small
\begin{aligned}
&\sum_{t\in[T]}\frac{(1-\mu\eta\lambda)^{-t}}{\sum_{t\in[T]}(1-\mu\eta\lambda)^{-t}}\mathbb{E}\left\|\nabla f\left(\rvu_{t}\right)\right\|^{2}\\
\leq &\sum_{t\in[T]}\frac{(1-\mu\eta\lambda)^{-t}}{\sum_{t\in[T]}(1-\mu\eta\lambda)^{-t}}\frac{2(f(\rvu_t)-f_*)}{\eta\lambda}\\
&\quad-\sum_{t\in[T]}\frac{(1-\mu\eta\lambda)^{-t}}{\sum_{t\in[T]}(1-\mu\eta\lambda)^{-t}}\frac{2(f(\rvu_{t+1})-f_*)}{\eta\lambda}\\
&\quad+\frac{2\eta L\sigma_{l}^{2}}{\lambda KS} + \frac{8\eta LG^{2}}{\lambda S}
+\frac{6\eta_l^2 L^{2}}{\lambda}\sum_{j\in I}\left(6K^{2}M_{j}+P_{j}\right) C\\
&\quad +\frac{18\eta_{l}^{2}KA^{2}L^{2}\sigma_{l}^{2}}{\lambda}+
\frac{144\eta_{l}^{2}AK^{2}L^{2}G^{2}}{\lambda}
\end{aligned}
\end{equation}
\begin{equation*}
\begin{aligned}
\leq & \frac{2(f_0-f_*)}{\eta\lambda\sum_{t\in[T]}(1-\mu\eta\lambda)^{-t}}+\frac{2\eta L\sigma_{l}^{2}}{\lambda KS} + \frac{8\eta LG^{2}}{\lambda S}
+\frac{6\eta_l^2 L^{2}V C}{\lambda}\\
&\quad +\frac{18\eta_{l}^{2}KA^{2}L^{2}\sigma_{l}^{2}}{\lambda}+
\frac{144\eta_{l}^{2}AK^{2}L^{2}G^{2}}{\lambda}\\
\overset{(a)}{\leq}&4\mu(f_0-f_*)e^{-\mu\eta_{l}\lambda KT}+\frac{2\eta_{l} L\sigma_{l}^{2}}{\lambda S} + \frac{8\eta_{l} KLG^{2}}{\lambda S}
+\frac{6\eta_l^2 L^{2}V C}{\lambda}\\
&\quad +\frac{18\eta_{l}^{2}KA^{2}L^{2}\sigma_{l}^{2}}{\lambda}+
\frac{144\eta_{l}^{2}AK^{2}L^{2}G^{2}}{\lambda}.\\
\end{aligned} 
\end{equation*}
where $(1-\mu\eta_{l}K\lambda)^T\leq e^{-\mu\eta_{l}\lambda KT}\leq e^{-1}\leq \frac{1}{2}$ when $\frac{1}{\lambda\mu KT}\leq\eta_l$.\\

Output $\rvu^\textup{out}$ is selected with probability proportional $\frac{(1-\mu\eta\lambda)^{-t}}{\sum_{t\in[T]}(1-\mu\eta\lambda)^{-t}}$. Then, this yields
\begin{equation}
\small
\begin{aligned}
\mathbb{E}\left\|\nabla f\left(\rvu^\textup{out}\right)\right\|^{2}& \leq 4\mu(f_0-f_*)e^{-\mu\eta_{l}\lambda KT}+\Psi_2
\end{aligned}
\end{equation}
where 
\begin{equation}
\small
\begin{aligned}
\Psi_2&=\frac{1}{\lambda}\Bigg(\frac{2\eta_l L\sigma_{l}^{2}}{S} + \frac{8\eta_l KLG^{2}}{S}+18\eta_{l}^{2}KA^{2}L^{2}\sigma_{l}^{2}\\
&\quad+144\eta_{l}^{2}AK^{2}L^{2}G^{2}\ +6\eta_l^2 L^{2}K^{2}V C\Bigg)
\end{aligned}
\end{equation}

\end{document}